\documentclass[journal]{IEEEtran}
\usepackage{amsmath,amssymb,amsfonts}
\usepackage{bm}
\usepackage{graphicx}
\usepackage{cite}
\usepackage{algorithm}
\usepackage{algorithmic}
\graphicspath{{figures/}}

\newtheorem{theorem}{Theorem}
\newtheorem{lemma}{Lemma}
\newtheorem{proposition}{Proposition}

\newtheorem{remark}{Remark}
\newtheorem{assumption}{Assumption}

\begin{document}

\title{Deterministic DTFT Interpolation for Joint Frequency and
Chirp-Rate Estimation: Cell-Uniform Efficiency and Threshold Analysis}

\author{Miaomiao~Wei, Jianjun~Li, Yang~Wang, Huaiyuan~Chen, Lulu~Gao,
and~Hang~Liu%
\thanks{The authors are with Zhongyuan University of Technology, Zhengzhou,
China (corresponding author: Miaomiao Wei, e-mail: mmwei@zut.edu.cn).}%
\thanks{This work was supported in part by the National Natural Science
Foundation of China under Grant 62301624, and in part by the Key Science
and Technology Research Project of Henan Province under Grant
242102211006.}}

\maketitle

\begin{abstract}
Joint frequency and chirp-rate estimation for a noisy chirp signal
arises in radar, sonar, and burst satellite communications. Conventional
estimators combine a coarse grid search with fine interpolation;
accuracy degrades at the edges of the residual cell (the edge effect)
and below the breakdown SNR (the threshold effect). We present a
deterministic two-stage estimator that controls both failure modes
uniformly over the residual cell. The estimator combines a
time-centered, zero-padded dechirp--FFT acquisition bank with
alternating selectable-$p$ amplitude-interpolation refinements on DTFT
samples at fractional bins; in the centered frame, the
frequency--chirp-rate cross-term of the Fisher information vanishes. The
paper derives a mean-squared-error and threshold characterization over
the full SNR range, in closed form except for one calibrated scalar (an
effective cell count), to our knowledge the first for the joint problem:
the breakdown threshold is governed by the cell count, and its
cell-position dependence is dominated by the scalloping loss of the
coarse FFT, which the padding bounds at 0.4 dB. An asymptotic uniformity
analysis over the cell, including its corners, gives fixed-point
variance ratios of $1.003$ and $0.998$, analytically free of the
residual. A closed-form bias analysis under a cubic phase mismatch shows
the centered chirp-rate estimate is insensitive to first order. Monte
Carlo experiments at $N=256$ (validated at $N=32$--$512$) measure
frequency- and chirp-rate-axis efficiencies with median $1.03$ and worst
case $1.07$ over $144$ cell positions at $-5$~dB. Threshold predictions
hold within $1.0$~dB on four configurations not used in the calibration.
The dechirp--FFT bank is fully parallel, and each of the four refinement
iterations evaluates three DTFT samples per axis; under fixed operating
conditions, per-estimate latency is constant at $O(N\log N)$ cost.
\end{abstract}

\begin{IEEEkeywords}
Chirp-rate estimation, frequency estimation, DTFT interpolation,
Cram\'er--Rao bound, threshold analysis, edge effect.
\end{IEEEkeywords}

\section{Introduction}\label{sec:intro}

\IEEEPARstart{E}{stimating} the frequency and the chirp rate of a
single-component linear-chirp signal
from a small number of noisy samples is a classical problem in statistical signal
processing, arising when a propagation delay changes at an
approximately constant rate: radar and sonar returns from accelerating targets
\cite{Abatzoglou1986,Xia2000DCFT}, Doppler and Doppler-rate
acquisition in burst satellite and mobile communications
\cite{Jiang2013,Wang2020JSAC}, and instrumentation of swept sources
\cite{AldimashkiSerbes2020}. The
joint Cram\'er--Rao bound (CRB) for this model was derived by Peleg
and Porat \cite{PelegPorat1991}. Attaining it over the whole parameter space,
at low SNR, and at predictable computational cost remains
difficult. Conventional estimators are two-stage: a coarse
search localizes the parameter pair on a grid at cell resolution, and
a fine interpolation stage resolves the residual offsets within the
cell. This architecture exposes two distinct failure modes. The
\emph{edge effect} is a property of the fine estimation stage: the variance of
DFT-interpolation estimators becomes nonuniform over the residual
cell, rising above the CRB as the residual approaches the half-bin
boundary, and the degradation intensifies as the SNR decreases
\cite{Zhang2021AmplitudeRatio,Wei2023DsIpDTFT}. The \emph{threshold
effect} is a property of the coarse estimation stage: below a breakdown SNR, the
global search produces outliers and the mean-squared error (MSE)
departs from the CRB by orders of
magnitude \cite{RifeBoorstyn1974,SerbesQaraqe2022}. Because the
residual depends on the parameters to be estimated, an
estimator must
control both effects uniformly over the entire cell, including its
corners, where both residuals are simultaneously at their largest.

Fine frequency estimation of a single tone has been studied
extensively.
Fixed-coefficient estimators interpolate the residual from two or
three DFT samples \cite{Quinn1994,Macleod1998,Candan2011}; iterative
interpolation on two Fourier coefficients achieves the asymptotic CRB
(the Aboutanios--Mulgrew, or A\&M, estimator)
\cite{AboutaniosMulgrew2005}, and shifting the sampling points by a
tunable fraction $q$ (the $q$-shift estimator, QSE) generalizes the
scheme \cite{Serbes2019QSE}; amplitude-ratio
inversion handles zero-padded signals \cite{Zhang2021AmplitudeRatio};
a noniterative weighted least-squares (WLS) interpolator attains accuracy close to the CRB at arbitrary residuals, targeting fast real-time
applications \cite{MorelliWLS2022}; the maximum-likelihood (ML)
interpolator and the CRB for arbitrary residuals were
derived in \cite{DAmicoMorelli2022}, with accuracy close to the CRB reached in a single interpolation step by adapting the interpolation factor to
the residual; and a
selectable-$p$ amplitude kernel was shown to avoid the low-SNR edge
effect by keeping its samples strictly inside the main lobe
\cite{Wei2022Selectable,Wei2023DsIpDTFT}. Recently, a
linearized-DTFT estimator \cite{Belega2024LIpDTFT}, an
iterative parabolic scheme aimed at deep-space Doppler tracking
\cite{Togni2025Parabolic}, and an auxiliary-DFT-sample iterative estimator reporting accuracy close to the CRB at all signal lengths together
with a reduced breakdown threshold \cite{Serbes2025Auxiliary}
have appeared. Interpolation has also been
extended to multidimensional harmonics (two and $K$ independent
spatial frequencies) in the 2-D/$K$-D QSE (QSE2) of
\cite{Solak2022QSE2}, with a multicomponent
extension in \cite{Solak2024SPL}.

The joint
$(f,\dot f)$ problem is structurally different from both
the 1-D and the multidimensional-harmonic settings. The chirp couples the two parameters within a single time series, and the corresponding
residual cell is two-dimensional. Interpolation
has been applied to this joint problem in the two-stage scheme of
Jiang and Le \cite{Jiang2013}, which localizes the chirp rate by an
explicit coarse search and then refines it by a three-point parabolic
interpolation of the log-likelihood, reaching accuracy close to the CRB above a low threshold, while the frequency is obtained by applying a
single-tone frequency estimator to the dechirped signal. The joint
estimation can also be addressed directly.
Maximum-likelihood treatments of the joint frequency/frequency-rate
problem were developed in \cite{Abatzoglou1986,DjuricKay1990}, and
polynomial-phase methods estimate the rate through phase-differencing
or bilinear transforms \cite{PelegFriedlander1995,OShea2004}, with
nonlinearities that raise the SNR threshold; the quasi-maximum-likelihood
(QML) line \cite{Djurovic2014QML,Djurovic2018QMLReview} lowers that
threshold substantially by regressing an instantaneous-frequency
track from a short-time Fourier transform, and the statistics of refinement stages in
two-stage polynomial-phase schemes have been analyzed in
\cite{OShea2010Refining}.

Transform-domain estimators
rotate the signal into the fractional Fourier transform (FrFT) domain
\cite{Almeida1994}, in which the rotation order is located by golden-section
search (GSS), with an asymptotic perturbation analysis showing
near-minimum-variance behavior \cite{AldimashkiSerbes2020} and a
recent high-performance refinement \cite{AldimashkiSerbes2024}, and
interpolation-based iterative refinement has likewise been applied to
FrFT-domain samples \cite{SongFrFT2013}; discrete chirp-Fourier
transform (DCFT) methods search an
$N\times N$ transform grid \cite{Xia2000DCFT,SongMDCFT2019}. In these
approaches, the frequency is ultimately recovered
by a 1-D estimator on the dechirped signal. The published accuracy
analyses concern the asymptotic (CRB) regime. Aldimashki and Serbes
\cite{AldimashkiSerbes2024} derive analytical models of the
peak FrFT magnitude in the noisy case and use them to lower the
breakdown threshold of the coarse search, but no closed-form
characterization of the threshold itself is available. Most recently, learning-based estimators
have entered the problem space: unfolded sparse-recovery networks
address off-grid frequency estimation
\cite{Pan2025OGFreq,Zhang2025ADMM}, and a complex-valued network
targets the chirp problem itself, motivated by the same cost of the
two-dimensional search \cite{Hou2025CVNN}. These estimators learn
their update operators from data; their threshold behavior and cell
uniformity are not analyzed.

On the theoretical side, the threshold and
no-information regions of ML frequency estimation have recently been
characterized exactly for a single tone \cite{SerbesQaraqe2022},
which completes a line of analysis begun in
\cite{RifeBoorstyn1974,QuinnKootsookos1994}, but,
to the best of our knowledge, no \emph{closed-form} threshold
characterization exists
for the joint $(f,\dot f)$ problem, and no estimator of it comes with a
characterization of uniform CRB efficiency over the two-dimensional
residual cell. A further obstacle appears intrinsic. In the
natural parameterization (frequency referenced to the first sample),
the joint CRB exhibits a near-perfect anti-correlation between the two parameters ($-\sqrt{15}/4\approx-0.97$ asymptotically), suggesting
that errors in one axis must contaminate the other. This coupling
is an artifact of the time origin, and the classical remedy is to center the time index about the center of the observation, which
removes the odd-moment cross-terms of the Fisher information and
minimizes the polynomial-phase bounds \cite{RisticBoashash1998}. The
starting point of this paper is the
observation that a joint interpolation loop with a fixed operation count, built in the centered coordinates, inherits the decoupling at every refinement
step, at finite signal length.

This paper presents a deterministic estimator of the
pair $(f,\dot f)$, together
with the analysis that the two-stage
architecture calls for. The contributions are the following.

\begin{enumerate}
\item A two-stage joint estimator: a centered,
zero-padded dechirp--FFT acquisition bank followed by $Q$ alternating
refinements with a selectable-$p$ amplitude kernel on DTFT samples at fractional bins. The schedule is fixed a priori, with an
$O(K_{\mu}N\log N)$ acquisition cost that is $O(N\log N)$ per branch
of the fully parallel $K_{\mu}$-branch bank ($K_{\mu}$ set by
the chirp-rate prior), and constant latency.
The coarse padding factor is set at $M_{\mathrm{fac}}=3$, which caps
the scalloping-induced threshold advance at $0.4$~dB across the cell.
\item A three-segment MSE characterization
of the joint estimator over the entire SNR axis, in closed form up to a
single calibrated scalar: it gives the acquisition
probability, the breakdown threshold governed by the effective cell
count of the two-dimensional search, and the per-axis no-information
floors, and it makes the cell-position dependence of the threshold
explicit through the FFT scalloping loss, controllable by zero-padding.
This extends the 1-D
threshold theory of \cite{SerbesQaraqe2022} to the joint problem.
\item An asymptotic uniformity analysis
of the refinement stage over the residual
cell, including the corners: with the centered coordinates, the
selectable-$p$ kernel, and the fixed-point iteration (contraction
verified by simulation), the cell position drops out of the leading
error term, yielding closed-form fixed-point
efficiency constants on both axes that are
analytically free of the residual; the complete alternating
algorithm's residual excess is
measured below $7\%$ over the whole cell at $-5$~dB. An ablation
shows centering is necessary: the identical kernel without
centering degrades by up to a factor of $1.8\times10^{4}$ at high SNR.
\end{enumerate}

The rest of the paper is organized as follows.
Section~\ref{sec:model} states the signal model, defines the residual
cell, and derives the centered reparameterization with its closed-form
joint CRBs. Section~\ref{sec:estimator} specifies the estimator and
its constants. Section~\ref{sec:theory} develops the threshold
theorem, the cell-uniformity theorem, and the jerk-robustness
proposition. Section~\ref{sec:experiments} validates the claims by
Monte Carlo simulation against five baselines and the closest prior
joint interpolator, and
Section~\ref{sec:conclusion} concludes.

\section{Signal Model and Joint Cram\'er--Rao Bound}\label{sec:model}

\subsection{Signal Model}\label{ssec:signal}

We observe $N$ uniformly spaced samples of a single-component
second-order polynomial-phase (linear-chirp) signal in additive noise,
\begin{equation}\label{eq:model}
  x[n] = A\,e^{j\psi[n]} + w[n], \qquad n = 0,\dots,N-1,
\end{equation}
with phase
\begin{equation}\label{eq:phase}
  \psi[n] = \phi_0 + 2\pi\Bigl(\nu n + \tfrac{1}{2}\mu n^{2}\Bigr),
\end{equation}
where $A>0$ is the unknown deterministic amplitude, $\phi_0$ is the
unknown initial phase, and $w[n]$ is circularly symmetric complex
white Gaussian noise of variance $\sigma^{2}$. The signal-to-noise ratio is
$\rho = A^{2}/\sigma^{2}$, written $\gamma \triangleq 10\log_{10}\rho$
in decibels. The normalized frequency
$\nu$ (cycles/sample) and the normalized chirp rate $\mu$
(cycles/sample$^{2}$) are the parameters to be estimated, while
$(A,\phi_0)$ are nuisance parameters. For a sampling rate $f_{s}$, the
physical frequency and chirp rate are recovered as $f_{0}=\nu f_{s}$
and $\dot f = \mu f_{s}^{2}$, so that \eqref{eq:model} covers, e.g.,
the joint estimation of a Doppler shift and its rate of change from a
short burst. Throughout, estimation accuracy is reported through the
ratio
\begin{equation}\label{eq:eta}
  \eta_{\theta} = \frac{\mathbb{E}\bigl[(\hat\theta-\theta)^{2}\bigr]}
                       {\mathrm{CRB}_{\theta}},
  \qquad \theta \in \{\nu,\mu\},
\end{equation}
so that $\eta_{\theta}=1$ indicates attainment of the Cram\'er--Rao
bound. Above threshold, the bias of the compared estimators is
negligible, and the MSE coincides with the variance, so
$\eta_{\theta}$ is then a variance ratio; in the threshold
region, where outliers dominate, $\eta_{\theta}$ follows the MSE
definition of \eqref{eq:eta}.

\subsection{Centered Reparameterization}\label{ssec:centered}

Let $c=(N-1)/2$ denote the mid-point of the signal and define the
centered time index $n' = n-c \in \{-c,\dots,c\}$. Substituting
$n=n'+c$ into \eqref{eq:phase} yields
\begin{equation}\label{eq:phase-centered}
  \psi[n] = \phi_{c} + 2\pi\Bigl(\nu_{c}\,n' + \tfrac12 \mu\,n'^{2}\Bigr),
\end{equation}
with the \emph{centered frequency} and absorbed phase
\begin{equation}\label{eq:nuc}
  \nu_{c} = \nu + \mu c, \qquad
  \phi_{c} = \phi_0 + 2\pi\bigl(\nu c + \tfrac12 \mu c^{2}\bigr).
\end{equation}
Thus $\nu_{c}$ is the instantaneous frequency at the center of the
signal, whereas $\nu$ is the instantaneous frequency at its first
sample; the chirp rate $\mu$ is unaffected. The centered index grid is
symmetric about zero, so all odd-order moments vanish:
\begin{equation}\label{eq:odd-moments}
  \sum_{n} n' = \sum_{n} n'^{3} = 0 .
\end{equation}
This cancellation underlies the decoupling results below.

\subsection{Residual Cell and Edge Effect}\label{ssec:cell}

Conventional estimators of $(\nu,\mu)$ operate in two stages: a coarse
stage that localizes the parameters on a discrete grid, followed by a
fine estimation stage that interpolates the residual offsets. The natural grid
spacing is $1/N$ in $\nu_{c}$ and $1/N^{2}$ in
$\mu$. A chirp-rate offset of $1/N^{2}$ accumulates a quadratic phase
deviation of approximately half a cycle over the signal, which is the
direct analog of the one-cycle deviation produced by a frequency
offset of one bin. We therefore write
\begin{equation}\label{eq:cell}
  \nu_{c} = \frac{k_{0}+\delta_{\nu}}{N}, \qquad
  \mu = \frac{\ell_{0}+\delta_{\mu}}{N^{2}},
\end{equation}
with $k_{0},\ell_{0}\in\mathbb{Z}$ and residuals
$(\delta_{\nu},\delta_{\mu})\in[-\tfrac12,\tfrac12]^{2}$. We refer to
this square as the \emph{residual cell} and to
$(\pm\tfrac12,\pm\tfrac12)$ as its \emph{corners}. The cell is
defined on the \emph{centered} frequency. After dechirping, the
spectral peak lies at $\nu_{c}=\nu+\mu c$. A cell defined on the
native $\nu$ would let the peak drift relative to the coarse FFT grid
with the arbitrary integer $\ell_{0}$ carried by $\mu$, injecting
$\ell_{0}$ into the scalloping loss and the threshold of
Section~\ref{sec:theory}; $\nu_{c}$ is therefore the coordinate used
for the coarse grid, the refinement, and all experiments. Since
the residuals
are determined by the unknown true parameters, an estimator fielded
without prior knowledge must perform uniformly over the entire cell.

Fine-stage interpolation estimators, by contrast, are known to exhibit the
\emph{edge effect}: their variance is nonuniform over the cell,
rising above the CRB as the residual approaches $\pm\tfrac12$, and the
degradation intensifies as the SNR decreases
\cite{Zhang2021AmplitudeRatio,Wei2023DsIpDTFT}. In the joint
$(\nu,\mu)$ problem, the cell is two-dimensional, and the corners are
the worst case.

\subsection{Joint CRB: Decoupling and Closed Forms}\label{ssec:crb}

For the deterministic signal \eqref{eq:model} in circularly symmetric
white Gaussian noise, the Fisher information matrix (FIM) over a
parameter vector $\bm\theta$ is
$[\mathbf{F}]_{ij} = (2/\sigma^{2})\,
\mathrm{Re}\bigl\{\sum_{n}
\partial s^{*}[n]/\partial\theta_{i}\,
\partial s[n]/\partial\theta_{j}\bigr\}$ with
$s[n]=A e^{j\psi[n]}$ \cite{Kay1993}. The amplitude $A$ is decoupled
from the phase parameters, with derivative weights
$\partial\psi/\partial\phi_{c} = 1$,
$\partial\psi/\partial\nu_{c} = 2\pi n'$, and
$\partial\psi/\partial\mu = \pi n'^{2}$. With
$S_{2}=\sum_{n} n'^{2} = N(N^{2}-1)/12$ and
$S_{4}=\sum_{n} n'^{4} = N(N^{2}-1)(3N^{2}-7)/240$, the FIM of the
centered phase parameters
$\bm\theta_{c}=(\phi_{c},\nu_{c},\mu)$ is
\begin{equation}\label{eq:fim-centered}
  \mathbf{F}(\bm\theta_{c}) = 2\rho
  \begin{bmatrix}
    N & 0 & \pi S_{2}\\
    0 & 4\pi^{2} S_{2} & 0\\
    \pi S_{2} & 0 & \pi^{2} S_{4}
  \end{bmatrix},
\end{equation}
where the two zeros are exact consequences of
\eqref{eq:odd-moments}: the cross-information terms
$I(\phi_{c},\nu_{c})\propto\sum n'$ and
$I(\nu_{c},\mu)\propto\sum n'^{3}$ vanish identically. Hence
$\nu_{c}$ is exactly decoupled from both nuisance and chirp-rate
parameters at finite $N$, and inverting
\eqref{eq:fim-centered} gives the closed-form bounds
\begin{align}
  \mathrm{CRB}_{\nu_{c}}
    &= \frac{3}{2\pi^{2}\rho\,N(N^{2}-1)},
    \label{eq:crb-nuc}\\
  \mathrm{CRB}_{\mu}
    &= \frac{90}{\pi^{2}\rho\,N(N^{2}-1)(N^{2}-4)},
    \label{eq:crb-mu}
\end{align}
with large-$N$ behavior
$\mathrm{CRB}_{\nu_{c}}\simeq 3/(2\pi^{2}\rho N^{3})$ and
$\mathrm{CRB}_{\mu}\simeq 90/(\pi^{2}\rho N^{5})$. First,
\eqref{eq:crb-nuc} coincides with the
classical single-tone frequency CRB \cite{RifeBoorstyn1974}. Because
of the decoupling, jointly estimating the chirp rate does not
reduce the achievable accuracy of the centered frequency. Second,
since the reparameterization \eqref{eq:nuc} leaves $\mu$ unchanged,
$\mathrm{CRB}_{\mu}$ is the same in both parameterizations.

In the native parameterization $(\nu,\mu)$ (time origin at the first
sample), the coupling reappears. There the
cross-information $I(\nu,\mu)\propto\sum_{n} n^{3}\neq 0$, and the
normalized correlation coefficient of the joint bound, obtained from
the inverse FIM, equals $-0.967$ at $N=64$ and approaches
$-\sqrt{15}/4\approx-0.968$ as $N\to\infty$. This strong coupling can be
removed by the choice of time origin. The centering \eqref{eq:nuc}
cancels it at finite $N$. This reference-time choice minimizes the
polynomial-phase bounds \cite{RisticBoashash1998}.

Since \eqref{eq:nuc} is a linear reparameterization with Jacobian
$\partial(\nu_{c},\mu)/\partial(\nu,\mu)
=\bigl[\begin{smallmatrix}1 & c\\ 0 & 1\end{smallmatrix}\bigr]$, the
native frequency bound follows from
\eqref{eq:crb-nuc}--\eqref{eq:crb-mu} without a separate inversion:
\begin{align}
  \mathrm{CRB}_{\nu}
   &= \mathrm{CRB}_{\nu_{c}} + c^{2}\,\mathrm{CRB}_{\mu}
    \label{eq:crb-native}\\
   &= \mathrm{CRB}_{\nu_{c}}
      \Bigl[\,1 + \frac{15\,(N-1)^{2}}{(N-2)(N+2)}\Bigr]
      \;\xrightarrow[N\to\infty]{}\; 16\,\mathrm{CRB}_{\nu_{c}}.
    \nonumber
\end{align}
The bound on the frequency at the signal edge is thus asymptotically
sixteen times the bound on the frequency at its center. This
inflation is intrinsic to the estimand---no estimator of $\nu$ can
avoid it---but it suggests the principle adopted in
Section~\ref{sec:estimator}: estimate the decoupled pair
$(\nu_{c},\mu)$, in which both bounds are attainable axis by axis, and
map back through $\hat\nu = \hat\nu_{c} - c\hat\mu$, which attains
\eqref{eq:crb-native} provided the centered estimates are efficient
and uncorrelated.

Unless stated otherwise, the efficiencies \eqref{eq:eta} reported in
Section~\ref{sec:experiments} are computed in the centered
coordinates, i.e., against
\eqref{eq:crb-nuc}--\eqref{eq:crb-mu}, for all methods alike; this is
the parameterization in which every compared estimator is given the
benefit of the decoupled bound.

A central claim of this paper, formalized in
Section~\ref{sec:theory}, is that the proposed estimator attains efficiency close to the CRB on both axes uniformly over the whole cell,
including the corners, $3$~dB or more above the threshold SNR.

\section{Proposed Estimator}\label{sec:estimator}

The estimator follows the classical two-stage architecture of
Section~\ref{ssec:cell}, with three choices that, taken
together, produce the uniform-efficiency behavior established in
Section~\ref{sec:theory}: (i)~all dechirping and refinement are
carried out in the \emph{centered} coordinates of
Section~\ref{ssec:centered}, so that the two axes remain decoupled at
every step; (ii)~the coarse estimation stage is a zero-padded dechirp--FFT bank
with padding factor $M_{\mathrm{fac}}\ge 3$, which equalizes the
detection threshold across the residual cell; and (iii)~the fine estimation stage interpolates DTFT samples at fractional bins with the
selectable-$p$ amplitude kernel of
\cite{Wei2022Selectable,Wei2023DsIpDTFT}, extended here to alternate
between the frequency and chirp-rate axes. The refinement schedule is
\emph{deterministic and search-free}: each iteration evaluates three
DTFT samples per axis, at predetermined offsets from the running
estimate, and each axis update is in closed form, so the computational
schedule of the complete estimator is independent of the data and of
the SNR.
Algorithm~\ref{alg:joint} summarizes the steps.

Throughout this section, let
\begin{equation}\label{eq:dechirped-dtft}
  X_{m}(\lambda) \;=\; \sum_{n=0}^{N-1} x[n]\,
      e^{-j\pi m (n-c)^{2}}\, e^{-j2\pi \lambda n/N}
\end{equation}
denote the DTFT sample, at the (possibly fractional) bin
$\lambda$, of the observation dechirped with trial chirp rate $m$
about the signal center, and let
$\operatorname{clip}(x,[a,b])=\min\{\max\{x,a\},b\}$. By \eqref{eq:phase-centered}, when $m=\mu$
the dechirped signal is a pure tone at the centered frequency
$\nu_{c}$, so $|X_{\mu}(\lambda)|$ peaks at $\lambda=N\nu_{c}$.

\begin{algorithm}[t]
\caption{Deterministic joint $(\nu,\mu)$ estimation}
\label{alg:joint}
\begin{algorithmic}[1]
\REQUIRE $x[n]$, $n=0,\dots,N{-}1$; prior interval
  $[-\mu_{\max},\mu_{\max}]$ sampled by $K_{\mu}$ branches;
  $p\in(0,\tfrac12)$; $Q$; $M_{\mathrm{fac}}\ge 3$
\ENSURE $(\hat\nu,\hat\mu)$
\STATE $c\leftarrow (N{-}1)/2$;\quad $M\leftarrow M_{\mathrm{fac}}N$
\STATE \emph{Coarse estimation stage (dechirp--FFT bank):}
\FORALL{$m_{i}=-\mu_{\max}+\tfrac{2\mu_{\max}(i-1)}{K_{\mu}-1}$,\
        $i=1,\dots,K_{\mu}$ \COMMENT{$K_{\mu}$ branches,
        spacing ${\approx}1/N^{2}$}}
  \STATE $P_{i}[k]\leftarrow\bigl|\sum_{n} x[n]\,
         e^{-j\pi m_{i}(n-c)^{2}}\,e^{-j2\pi kn/M}\bigr|$
         \COMMENT{FFT of length $M$}
\ENDFOR
\STATE $(i^{\star},k^{\star})\leftarrow\arg\max_{i,k}P_{i}[k]$;\quad
       $\hat\mu\leftarrow m_{i^{\star}}$;\quad
       $\hat{k}\leftarrow k^{\star}N/M$
\STATE \emph{Fine estimation stage ($Q$ alternating refinements; the first
       iteration refines $\hat{k}$ using the coarse $\hat\mu$):}
\FOR{$q=1,\dots,Q$}
  \STATE $\hat{k}\leftarrow\hat{k}
         +\operatorname{clip}\bigl(\hat\delta_{\nu},
          [-\tfrac12,\tfrac12]\bigr)$ with
         $\hat\delta_{\nu}$ from \eqref{eq:ds-update}
         \COMMENT{$\nu$ axis}
  \STATE $H_{s}\leftarrow|X_{\hat\mu+s/N^{2}}(\hat{k})|$,\quad
         $s\in\{-p,0,p\}$ \COMMENT{$\mu$ axis}
  \IF{$H_{-p}-2H_{0}+H_{p}<0$}
    \STATE $\hat\mu\leftarrow\hat\mu
           +\operatorname{clip}\bigl(\hat\delta_{\mu},[-1,1]\bigr)/N^{2}$,
           \quad $\hat\delta_{\mu}=\dfrac{p}{2}\,
           \dfrac{H_{-p}-H_{p}}{H_{-p}-2H_{0}+H_{p}}$
  \ENDIF
\ENDFOR
\STATE $\hat\nu_{c}\leftarrow\hat{k}/N$;\quad
       $\hat\nu\leftarrow\hat\nu_{c}-c\,\hat\mu$
\end{algorithmic}
\end{algorithm}

\subsection{Coarse Estimation Stage: Centered Dechirp--FFT Bank}\label{ssec:coarse}

The coarse estimation stage evaluates $|X_{m_{i}}(\cdot)|$ on a chirp-rate grid of
spacing $1/N^{2}$, covering the prior
interval $[-\mu_{\max},\mu_{\max}]$, each via an FFT zero-padded to
length $M=M_{\mathrm{fac}}N$, and takes the overall peak. This
acquisition is a fixed grid search over the prior
rectangle, deterministic and fully parallel across branches;
Theorem~\ref{thm:threshold} analyzes it as such. A similar bank-plus-interpolation acquisition was
used for coarse carrier-frequency-offset estimation of high-dynamic
signals in \cite{Wei2022CFOCoarse}; the contribution here lies in the
refinement stage that the bank initializes, and in its uniformity and
threshold analysis. Because the
dechirping is centered, the residual chirp left by a grid mismatch
$\Delta\mu$ is $e^{j\pi\Delta\mu (n-c)^{2}}$, an even function of
$n-c$, which broadens the tone but does not displace the peak frequency.
In the absence of noise the coarse outputs are therefore unbiased cell
localizations of
$(\nu_{c},\mu)$ with residuals confined to
$|\delta_{\nu}|\le 1/(2M_{\mathrm{fac}})$ bins (on the padded grid)
and $|\delta_{\mu}|\le\tfrac12$. The experiments use $K_{\mu}=64$
branches, i.e., $1.04$-cell spacing; the resulting worst-case
start-up excess over half a cell is absorbed by the clipped
fine-stage updates. The corner efficiency at $-5$~dB is unchanged to
three digits for branch spacings from $0.52$ to $2.11$ cells
($2\times10^{4}$ Monte Carlo trials). Under noise, near a cell boundary the
$\arg\max$ may select the adjacent straddling cell, leaving a starting
residual just outside the nominal range; the clipped fine-stage
updates (Section~\ref{ssec:fine}) traverse up to half a bin per iteration
and absorb such near misses within one or two iterations.

The padding factor matters through the scalloping (straddle) loss of
the coarse search. The padded FFT samples the peak at most
$1/(2M_{\mathrm{fac}})$ bins from the true frequency, so the
worst-case loss over the cell is
$10\log_{10}\operatorname{sinc}^{2}\bigl(1/(2M_{\mathrm{fac}})\bigr)$:
$-3.9$~dB for the unpadded search, attained at the corner
$\delta_{\nu}=\tfrac12$, which advances the
outlier-production threshold there by about $4$~dB relative to the
cell center. For the 1-D single-tone problem, padding beyond
$M_{\mathrm{fac}}=2$ is known to bring only marginal further
threshold improvement \cite{Serbes2025Auxiliary}; the requirement
adopted here is stricter: a threshold \emph{equalized across cell
positions}, extending the uniformity into
the threshold region. $M_{\mathrm{fac}}=2$ still admits a $-0.9$~dB
loss at the quarter-cell positions $\delta_{\nu}=\pm\tfrac14$; the
corner itself falls on the padded grid. $M_{\mathrm{fac}}=3$ instead
caps the loss at
$10\log_{10}\operatorname{sinc}^{2}(\tfrac16)\approx-0.4$~dB
everywhere. The choice enters the threshold
theory explicitly, through the loss
factor $L^{2}=\operatorname{sinc}^{2}d$ in the detection probability
of Theorem~\ref{thm:threshold}.

\subsection{Fine Estimation Stage: Alternating Selectable-$p$
Interpolation}\label{ssec:fine}

Each of the $Q$ refinement iterations updates the two axes in turn,
always through samples at fractional bins
\eqref{eq:dechirped-dtft} of the \emph{original} data.

\subsubsection{Frequency axis} With the current chirp-rate estimate
$\hat\mu$ removed, the residual $\delta$ of the running
centered-frequency bin $\hat{k}$ is estimated by the selectable-$p$
amplitude interpolator \cite{Wei2023DsIpDTFT}
\begin{equation}\label{eq:ds-update}
  \hat\delta_{\nu} \;=\;
  \frac{p\,\bigl(|X_{\hat\mu}(\hat{k}{+}p)|-|X_{\hat\mu}(\hat{k}{-}p)|\bigr)}
       {|X_{\hat\mu}(\hat{k}{+}p)|+|X_{\hat\mu}(\hat{k}{-}p)|
        -2|X_{\hat\mu}(\hat{k})|\cos\pi p},
\end{equation}
where $p\in(0,\tfrac12)$ is the interpolation factor. The update
\eqref{eq:ds-update} algebraically inverts the spectral kernel, as
exploited by the original 1-D estimator. In the absence of noise it returns the residual in a single
step, uniformly for all $\delta\in(-\tfrac12,\tfrac12)$ including the
cell edge: in the large-$N$ limit, and to $O(1/N^{2})$ for finite
$N$ (Lemma~\ref{lem:inverse}).

\subsubsection{Chirp-rate axis} The chirp-rate residual is probed by
the same selectable-$p$ principle applied along the $\mu$ axis. The
three magnitudes $H_{s}=|X_{\hat\mu+s/N^{2}}(\hat{k})|$,
$s\in\{-p,0,p\}$, are computed by re-dechirping at chirp-rate offsets
$\pm p$ cells, and the vertex of the interpolating parabola,
$\hat\delta_{\mu}=\tfrac{p}{2}(H_{-p}-H_{p})/(H_{-p}-2H_{0}+H_{p})$,
updates $\hat\mu$ by $\hat\delta_{\mu}/N^{2}$. This three-point
parabolic vertex is the standard equally-spaced peak interpolation
adopted, in particular, by Jiang and Le for the chirp-rate axis of the
joint problem \cite{Jiang2013}; a parabola is used here rather than
\eqref{eq:ds-update} because the peak along the chirp-rate axis
is a Fresnel-type broadening curve, not a Dirichlet kernel, so the
closed-form inverse does not apply; the parabola is kernel-agnostic,
and the residual left by its small model mismatch is eliminated by the
remaining iterations. Two inexpensive
safeguards make the iteration robust at very low SNR: each frequency
update is clipped to half a cell, and the chirp-rate update is applied only when the three-point interpolation is concave; if concavity fails in all $Q$ iterations, $\hat\mu$ retains the coarse-grid
value; this fallback is observed only below threshold.

The alternation converges because of the centering
(Section~\ref{ssec:coarse}). A chirp-rate error does not displace the
frequency peak, and a frequency error enters the chirp-rate three-point interpolation only through the (flat) peak magnitude, so the two updates are
asymptotically decoupled and the joint iteration contracts to the
fixed point $(\hat\nu_{c},\hat\mu)=(\nu_{c},\mu)$. This
contraction is certified in the noiseless case and verified by
simulation under noise; it is stated as an explicit hypothesis
(Assumption~\ref{ass:contraction}). In all experiments, $Q=4$ iterations are used;
$Q=3$ already attains the same accuracy at the target SNRs. An iteration
ablation at the corner ($Q=1,\dots,6$) gives
$\eta_{\nu_{c}}=1.21$, $1.07$, then $1.02$--$1.03$ from $Q=3$ onward
at $-5$~dB, with the
chirp-rate axis converged from $Q=1$, so $Q=4$ carries one iteration of
margin. Finally,
the native frequency is recovered through the linear map
$\hat\nu=\hat\nu_{c}-c\hat\mu$ of \eqref{eq:nuc}.

\begin{remark}[Kernel constant under zero-padding]\label{rem:kernel}
The constant $\cos\pi p$ in the denominator of \eqref{eq:ds-update} is
the Dirichlet-kernel period of the length-$N$ signal; it is not
related to, and must not be replaced by, the zero-padding factor of
the coarse FFT. Zero-padding refines the grid on which an FFT
approximates the DTFT; the fine estimation stage instead evaluates the
DTFT directly at the fractional points $\hat{k},\hat{k}\pm p$, so
the only length scale involved is the signal length $N$. Writing the
kernel constant as $\cos(\pi N p/M)$ with $M=M_{\mathrm{fac}}N$
destroys the inversion property. In a noiseless test
($N=256$, $p=0.35$) the correct constant recovers
$\delta\in\{0.10,0.30,0.49\}$ to within $O(1/N^{2})$ in one step
(Lemma~\ref{lem:inverse}), whereas the miswritten constant with
$M_{\mathrm{fac}}\in\{2,3\}$ yields errors
of order one before any noise is added.
\end{remark}

\begin{remark}[Choice of the interpolation factor $p$]\label{rem:p}
The factor $p$ is fixed at $p=0.35$ throughout this paper, for two
complementary reasons.
\emph{(i) Insensitivity above threshold:} because
\eqref{eq:ds-update} inverts the spectral kernel for all
$p\in(0,\tfrac12)$,
the fixed-point variance of the iteration is
essentially independent of $p$. The closed-form efficiency
$\kappa_{\nu}(p,N)$ of Theorem~\ref{thm:edgefree} is flat in $p$, and
paired Monte Carlo runs over $p\in[0.05,0.49]$ track the
closed form across SNRs and cell positions
(Fig.~S5 of the supplementary material). The closed form \eqref{eq:kappa} shows
the residual $p$-dependence is monotone, with smaller $p$ giving
marginally lower variance, the direction of the small-shift
prescriptions derived for linearized complex-ratio kernels
\cite{Serbes2025Auxiliary}. But the total range is ${\approx}1.3\%$
over $p\in(0,0.49]$, because the update \eqref{eq:ds-update} inverts
the kernel algebraically. This contrasts with the near-iteration-free
1-D estimator of D'Amico--Morelli \cite{DAmicoMorelli2022}, which attains accuracy close to the CRB without iteration by selecting the
interpolation factor as a closed-form function of the current residual: in the proposed
centered, fixed-$Q$ iteration, efficiency stems from decoupling and
the fixed point, eliminating the tuning parameter rather than
prescribing it. \emph{(ii) Safe sampling at large residual:} the constraint
$p<\tfrac12$ keeps both interpolation samples inside the main lobe
for all residuals up to the cell edge, whereas half-bin sampling at
$\hat{k}\pm\tfrac12$ \cite{AboutaniosMulgrew2005} approaches the
spectral null adjacent to the peak and becomes ill-conditioned at low
SNR \cite{Wei2022Selectable,Wei2023DsIpDTFT}. Within the padded
pipeline used here the coarse estimation stage already confines the residual to
$|\delta|\le1/6$, so this advantage is modest; it becomes decisive
only when the padding is limited and the residual is allowed to reach
the cell edge.
\end{remark}

\subsection{Computational Complexity}\label{ssec:complexity}

Let $K_{\mu}$ denote the number of chirp-rate branches of the bank,
chosen to sample the prior at ${\approx}1/N^{2}$ spacing
($64$ in the experiments). The coarse estimation stage costs
$K_{\mu}$ dechirp--FFT pairs, i.e.,
$O\bigl(K_{\mu}\,M_{\mathrm{fac}}N\log(M_{\mathrm{fac}}N)\bigr)$
operations, proportional to the number of branches. A search-based refinement requires instead a number of transform
evaluations that is only logarithmic in the prior width, so for wide
priors the bank is the more expensive acquisition; its compensating
properties are determinism and full parallelism, the $K_{\mu}$
branches being independent. The fine estimation stage adds six $O(N)$ DTFT
samples per iteration, for a total of $O(QN)$, negligible against
the coarse estimation stage. The operation count is independent of the data;
the data enter only the two safeguard comparisons, so latency is
constant and
known in advance; this is of practical value on FPGA/DSP
platforms.

By contrast, search-based estimators refine the chirp rate by a
data-driven sequence of transform evaluations (e.g., golden-section
search over the fractional-Fourier order
\cite{AldimashkiSerbes2020}), in which the iterations are inherently
sequential and their count is set by the prescribed tolerance;
Section~S-II of the supplementary material reports a measured
end-to-end comparison.

\section{Theoretical Analysis}\label{sec:theory}

Theorem~\ref{thm:threshold}
characterizes when the acquisition stage succeeds. It gives a
closed-form description of the three-segment MSE behavior of the joint
estimator over the entire SNR axis, including the dependence of the
threshold on the cell position through the scalloping loss of the coarse
FFT. Theorem~\ref{thm:edgefree} characterizes how accurately
the refinement stage performs once acquisition succeeds. On the
frequency axis its efficiency is uniform over the residual
cell---including the corners---with a closed-form fixed-point
efficiency constant that does not depend on the cell position; the
chirp-rate axis admits the same leading-order constant
(Remark~\ref{rem:muaxis}), with the complete alternating algorithm
supported by simulation. Proposition~\ref{prop:jerk}
quantifies the sensitivity of the estimator to model order through the bias induced by a cubic (jerk) phase term not included in the model. Together, the
three results describe the estimator across the SNR axis and the
residual cell, and under first-order model mismatch.

\subsection{Threshold Region of the Joint Estimator}\label{ssec:thm1}

In the acquisition stage of Algorithm~\ref{alg:joint}, the global peak
is taken over the dechirp--FFT bank. Normalize the squared magnitudes
by $N\sigma^{2}$: a resolution cell containing only noise produces a
power $z\sim\operatorname{Exp}(1)$, while the cell containing the
signal produces a power $s$ with noncentral density
\begin{equation}\label{eq:signal-cell-pdf}
  f_{s}(t) = e^{-(t+\Gamma)} I_{0}\bigl(2\sqrt{\Gamma t}\bigr),
  \qquad
  \Gamma = N\,L^{2}(\delta_{\nu},M_{\mathrm{fac}})\,\rho,
\end{equation}
where $I_{0}$ is the zeroth-order modified Bessel function of the
first kind and $\Gamma$ is the coherently integrated SNR of the
signal cell. The factor
\begin{equation}\label{eq:straddle}
  L^{2} = \operatorname{sinc}^{2} d,
  \qquad
  d(\delta_{\nu},M_{\mathrm{fac}})
    = \Bigl|\delta_{\nu}
      - \tfrac{\operatorname{round}(\delta_{\nu}M_{\mathrm{fac}})}
              {M_{\mathrm{fac}}}\Bigr|,
\end{equation}
is the \emph{scalloping (straddle) loss}: the padded FFT samples the
Dirichlet peak at distance $d$ (in bins) from the true frequency, so
the sampled peak power is reduced by
$|D(d)/D(0)|^{2}\approx\operatorname{sinc}^{2}d$, where
$D(r)=\sin\pi r/\sin(\pi r/N)$. A chirp-rate residual leaves the
even quadratic phase $e^{j\pi(\delta_{\mu}/N^{2})(n-c)^{2}}$ and an
analogous peak-power loss, but an $N$-independent and much smaller
one: at most $0.07$~dB over the bank's mismatch range
$|\delta_{\mu}|\le0.52$, an order of magnitude below the
frequency-axis cap below, and absorbed into the calibration
henceforth. The frequency-axis loss is largest at the cell
corners and is controlled by the padding factor:
$d\le 1/(2M_{\mathrm{fac}})$, so $M_{\mathrm{fac}}=1$ admits up to
$-3.9$~dB at $\delta_{\nu}=\tfrac12$, $M_{\mathrm{fac}}=2$ admits
$-0.9$~dB at $\delta_{\nu}=\tfrac14$, and $M_{\mathrm{fac}}=3$
caps the loss at $-0.4$~dB (cf.\ Section~\ref{ssec:coarse}). Let
$\mathcal{M}$ denote the number of \emph{effectively independent}
noise cells spanned by the two-dimensional search: the padded bins
are correlated, so $\mathcal{M}$ is smaller than the literal grid
size and is treated as a single calibrated scalar
(Remark~\ref{rem:thm1-honesty}). It is distinct from the
coarse-FFT length $M_{\mathrm{fac}}N$ of
Section~\ref{sec:estimator}.

\begin{theorem}[Joint threshold region]\label{thm:threshold}
For the estimator of Algorithm~\ref{alg:joint}, with prior intervals
of width $B_{\nu}$ and $B_{\mu}$ on the two axes:

(i) Under an independent-effective-cell model of the competing noise
cells, with a single calibrated count $\mathcal{M}$
(Remark~\ref{rem:thm1-honesty}), the
probability that acquisition selects the signal cell is
\begin{equation}\label{eq:pd}
  P_{d} = \mathbb{E}_{s}\bigl[(1-e^{-s})^{\mathcal{M}}\bigr]
        = \int_{0}^{\infty} f_{s}(t)\,(1-e^{-t})^{\mathcal{M}}\,dt .
\end{equation}

(ii) For $\theta\in\{\nu_{c},\mu\}$ the MSE obeys the approximate
three-segment model
\begin{equation}\label{eq:three-segment}
  \mathrm{MSE}_{\theta}(\rho)
   \approx \bigl(1-P_{d}(\rho)\bigr)\,\Sigma_{\theta}
     + P_{d}(\rho)\,\mathrm{CRB}_{\theta}(\rho),
\end{equation}
with prior-variance floors $\Sigma_{\nu}=B_{\nu}^{2}/12$ and
$\Sigma_{\mu}=B_{\mu}^{2}/12$: the MSE follows the CRB for
$\rho\gg\rho_{\mathrm{th}}$, saturates at $\Sigma_{\theta}$ for
$\rho\ll\rho_{\mathrm{ni}}$, and transitions rapidly in between.

(iii) The breakdown threshold admits the asymptotic form
\begin{align}
  \Gamma_{\mathrm{th}} &\approx \ln \mathcal{M},
  \nonumber\\
  \rho_{\mathrm{th}}(\delta_{\nu},M_{\mathrm{fac}})
   &= \frac{\ln \mathcal{M}}{N\,L^{2}(\delta_{\nu},M_{\mathrm{fac}})},
  \label{eq:gth}
\end{align}
i.e., the threshold is governed by a single effective cell count of
the two-dimensional search,
and its dependence on the cell position is dominated by the scalloping
loss:
$\Delta\rho_{\mathrm{th}}\,[\mathrm{dB}]
 =-10\log_{10}L^{2}(\delta_{\nu},M_{\mathrm{fac}})$.
The no-information threshold $\rho_{\mathrm{ni}}$ is obtained from the
same soft threshold by solving $P_{d}(\rho_{\mathrm{ni}})=\epsilon$
for a prescribed small $\epsilon$.
\end{theorem}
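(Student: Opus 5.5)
The plan treats the three parts separately, because they rest on different ingredients. Part (i) is an order-statistics computation. Part (ii) is a total-expectation decomposition. Part (iii) is an asymptotic evaluation of the integral in \eqref{eq:pd}.

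\emph{Part (i).} First fix the statistics of the normalized cell powers. For a noise-only cell, $\sum_n w[n]e^{-j\pi m(n-c)^2}e^{-j2\pi kn/M}$ is circular complex Gaussian with variance $N\sigma^{2}$, because the dechirp and DFT factors have unit modulus. Its squared magnitude, divided by $N\sigma^{2}$, is therefore $\operatorname{Exp}(1)$.

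For the signal cell, the noiseless part has magnitude $A|D(d)|$, up to the small chirp-residual loss absorbed into calibration. Since $|D(d)|\approx N\operatorname{sinc}d$, the normalized noncentrality is $A^{2}N^{2}L^{2}/(N\sigma^{2})=NL^{2}\rho=\Gamma$. The normalized power is then noncentral $\chi^{2}_{2}/2$, whose density is exactly \eqref{eq:signal-cell-pdf}.

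Next invoke the independent-effective-cell model: the competing cells are replaced by $\mathcal{M}$ i.i.d.\ $\operatorname{Exp}(1)$ variables, independent of $s$. Conditional on $s=t$, correct acquisition means all $\mathcal{M}$ competitors fall below $t$. This has probability $(1-e^{-t})^{\mathcal{M}}$. Averaging over $f_s$ gives \eqref{eq:pd}.

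\emph{Part (ii).} Condition on the acquisition event. On success, the refinement starts inside the correct cell. I would then invoke the refinement analysis (Theorem~\ref{thm:edgefree} together with Assumption~\ref{ass:contraction}) to assert that the conditional MSE is $\approx\mathrm{CRB}_\theta$. On failure, the selected peak is a noise cell, which under the model is uniformly located over the prior rectangle. The clipped refinement then cannot return to the true cell, so the error is approximately uniform over a width-$B_\theta$ interval and its MSE is $B_\theta^{2}/12$.

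The law of total expectation then yields \eqref{eq:three-segment}. I would state explicitly that this step is approximate in two ways. It neglects the neighbouring-cell ``near misses'', which the clipped updates absorb. It also neglects the conditioning bias that success induces on the noise, which is a lower-order effect above threshold.

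The limiting statements follow from $P_d\to1$ as $\rho\to\infty$ and $P_d\to0$ as $\rho\to0$. For the second limit, $P_d\to\mathbb{E}[(1-e^{-z})^{\mathcal{M}}]=1/(\mathcal{M}+1)$, which is negligible.

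\emph{Part (iii).} This is the main obstacle, since it requires locating the transition of \eqref{eq:pd} in closed form. I would first use $(1-e^{-t})^{\mathcal{M}}\approx\exp(-\mathcal{M}e^{-t})$. This is a Gumbel approximation to the maximum of $\mathcal{M}$ exponentials, with location $\ln\mathcal{M}$ and unit scale.

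Then use the concentration of $s$ about $\Gamma$ for large $\Gamma$: its mean is $\Gamma+1$ and its standard deviation is $\sqrt{2\Gamma+1}$. Hence $P_d\approx\exp(-\mathcal{M}e^{-\Gamma})$ to leading order. The midpoint of the transition, where the exponent is $O(1)$, occurs at $\Gamma_{\mathrm{th}}\approx\ln\mathcal{M}$. Corrections are of order $\sqrt{\ln\mathcal{M}}$ from the spread of $s$ and $\ln\ln$-type terms from the soft definition; both are subleading for large $\mathcal{M}$. I would state these corrections as an approximation rather than bound them sharply.

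Substituting $\Gamma=NL^{2}\rho$ gives \eqref{eq:gth}. The dependence on cell position enters only through $L^{2}$, because $\mathcal{M}$ is a property of the noise field and does not depend on $\delta_\nu$. Taking $10\log_{10}$ of the ratio of thresholds therefore gives $\Delta\rho_{\mathrm{th}}=-10\log_{10}L^{2}$.

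Finally, $P_d(\rho)$ is continuous and increases monotonically from $1/(\mathcal{M}+1)$ to $1$. Monotonicity holds because $\Gamma$ increases with $\rho$, the noncentral family is stochastically increasing in $\Gamma$, and the integrand is increasing in $t$. Hence for any $\epsilon\in(1/(\mathcal{M}+1),1)$, the equation $P_d(\rho_{\mathrm{ni}})=\epsilon$ has a unique solution, which defines the no-information threshold.
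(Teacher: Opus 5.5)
Your proposal follows the paper's proof in Appendix~A essentially step for step. It conditions on the signal-cell power and averages $(1-e^{-t})^{\mathcal{M}}$ over $f_{s}$ for (i), uses total expectation with a CRB-accurate success branch and a prior-uniform outlier branch for (ii), and combines $(1-e^{-s})^{\mathcal{M}}\approx\exp(-\mathcal{M}e^{-s})$ with the concentration of $s$ about $\Gamma$ for (iii). Your additions are correct: the derivation of the $\operatorname{Exp}(1)$ and noncentral cell statistics, the floor $P_{d}\to1/(\mathcal{M}+1)$, and the monotonicity of $P_{d}$, which makes $\rho_{\mathrm{ni}}$ well defined for $\epsilon>1/(\mathcal{M}+1)$.

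Two small refinements. In (iii), the spread $\sqrt{2\Gamma+1}$ of $s$ near threshold is wider than the $O(1)$ width of the doubly exponential step, and increasingly so as $\mathcal{M}$ grows. The right intermediate is therefore the paper's $P_{d}\approx\Pr\{s>\ln\mathcal{M}\}$, not $\exp(-\mathcal{M}e^{-\Gamma})$; the latter misstates the transition shape, though not the leading-order location $\Gamma_{\mathrm{th}}\approx\ln\mathcal{M}$. In (ii), $P_{d}\ge1/(\mathcal{M}+1)$ while $\mathrm{CRB}_{\theta}(\rho)\to\infty$, so the term $P_{d}\,\mathrm{CRB}_{\theta}$ eventually dominates as $\rho\to0$. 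Saturation at $\Sigma_{\theta}$ thus holds over the practical low-SNR range rather than as a literal limit, a caveat the paper's statement shares.
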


\begin{IEEEproof}
See Appendix~A of the supplementary material.
\end{IEEEproof}

\begin{remark}[Accuracy budget of the threshold prediction]\label{rem:thm1-honesty}
Three approximations are involved. First, the extreme-value approximation
$\Gamma_{\mathrm{th}}\approx\ln\mathcal{M}$ biases the predicted
threshold high by up to about $1$~dB; the soft step has finite
width.
Second, the effective count $\mathcal{M}$ is smaller than the literal
grid size, because padded FFT bins are correlated and the across-axis
maxima are not strictly independent; we therefore treat $\mathcal{M}$
as a \emph{single calibrated scalar}: the integer among $200$
logarithmically spaced candidates in $[10^{2},10^{5.5}]$ that
minimizes the mean squared dB error of
\eqref{eq:three-segment} over the transition region
($1.5<\eta<$ half the no-information floor) of the center,
$M_{\mathrm{fac}}=3$ curve, giving
$\mathcal{M}=15794$ at $N=256$ and then held fixed. That one curve is
in-sample; every other configuration is held out. $\mathcal{M}$ is
specific to the search configuration and is not transported. The
other-length onsets of
Section~\ref{sec:experiments} are measured,
not predicted. The fitted value is physically interpretable. The
acquisition takes its peak over $M_{\mathrm{fac}}N=768$ padded
frequency bins times the $64$ chirp-rate branches
(Section~\ref{sec:experiments}), i.e.\ $49152$ evaluated cells, but
only $N\times66\approx16896$ \emph{resolution} cells; the fitted
$\mathcal{M}$ is $93\%$ of the resolution-cell count---the
threefold-oversampled padded bins
contribute almost no independent chances. Third, the outlier model treats every
competing cell as signal-free, whereas the mismatched-dechirp branches
retain part of the signal energy. Near-threshold mis-selections are
therefore dominated by cell-scale near misses rather than
prior-uniform outliers, which increase the effective
$P_{d}$ and contribute far less than $\Sigma_{\theta}$ to the MSE, so
the three-segment model errs on the conservative side in the
transition region. All other predictions, in particular the cell-position
threshold advances at reduced padding, follow from the same single
calibrated scalar
and match Monte Carlo threshold locations within $-0.0$ to
$+1.0$~dB on the four configurations not used in the calibration
(Section~\ref{sec:experiments}); the scalloping-loss term is thus an
independently validated prediction. We present the theorem as a
calibrated
engineering model: (i) is
exact only under the stated surrogate, and (ii)--(iii) inherit its
calibration.
\end{remark}

\subsection{Cell-Uniform CRB Efficiency (Edge-Free)}\label{ssec:thm2}

We now turn to the refinement stage, which converts the acquired cell
into fine frequency and chirp-rate estimates through alternating
selectable-$p$ updates; the frequency-axis update \eqref{eq:ds-update}
is an \emph{algebraic inverse}.

\begin{lemma}[Inversion]\label{lem:inverse}
In the absence of noise, a single application of \eqref{eq:ds-update}
inverts the residual: $\hat\delta_{\nu}=\delta_{\nu}$
for the large-$N$ (asymptotic) spectral kernel, and with an error
$O(1/N^{2})$ for the finite-$N$ kernel, for all
$\delta_{\nu}\in(-\tfrac12,\tfrac12)$, $p\in(0,\tfrac12)$, and
$N$. The residual bound is uniform over the cell.
\end{lemma}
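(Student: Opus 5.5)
Noiseless case with $m=\hat\mu=\mu$: the dechirped sequence is $A e^{j\phi_c}e^{j2\pi\nu_c n'}$. Write $\hat k=k_0$, so that $N\nu_c=\hat k+\delta$ with $\delta=\delta_\nu$. Then for any real offset $r$,
\[
|X_\mu(\hat k+r)| = A\,|D(\delta-r)|,\qquad D(x)=\frac{\sin\pi x}{\sin(\pi x/N)},
\]
because the linear phase $e^{-j2\pi\lambda n/N}$ factors as $e^{-j2\pi\lambda c/N}e^{-j2\pi\lambda n'/N}$ and the constant phase drops out of the magnitude. The plan is to substitute this exact expression into \eqref{eq:ds-update}: first treat the asymptotic kernel exactly, then bound the finite-$N$ correction.

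\emph{Asymptotic kernel.} Replace $D(x)$ by $N\,\sin(\pi x)/(\pi x)$ and cancel the common factor $AN/\pi$. Put $s=\sin\pi\delta$. Since $|\delta|<\tfrac12$ and $p<\tfrac12$, both $\delta\pm p$ lie in $(-1,1)$, so all three samples are in the main lobe and the absolute values can be removed with consistent signs. The three magnitudes become
\[
\frac{s}{\delta},\qquad \frac{\sin\pi(\delta-p)}{\delta-p},\qquad \frac{\sin\pi(\delta+p)}{\delta+p},
\]
with the signs fixed by the main-lobe condition. Expand $\sin\pi(\delta\mp p)=s\cos\pi p\mp\cos\pi\delta\,\sin\pi p$ and clear denominators by the common factor $\delta(\delta^2-p^2)$. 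The $\cos\pi\delta\,\sin\pi p$ terms cancel in both numerator and denominator, leaving
\[
\text{numerator}=\frac{2p^2 s\cos\pi p}{\delta^2-p^2}\cdot(\text{common factor}),\qquad
\text{denominator}=\frac{2ps\cos\pi p}{\delta(\delta^2-p^2)}\cdot(\text{common factor}).
\]
The ratio is exactly $\delta$. This is the algebra I will verify carefully, including the removable singularity at $\delta=0$ (handle it by continuity, or by noting the numerator vanishes by symmetry).

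\emph{Finite-$N$ kernel.} Write $1/\sin(\pi x/N)=\frac{N}{\pi x}\bigl(1+\frac{\pi^2x^2}{6N^2}+O(N^{-4})\bigr)$, uniformly for $|x|\le1$. Each sampled magnitude is then its asymptotic value times $1+\epsilon(x)$, with $\epsilon(x)=\pi^2x^2/(6N^2)+O(N^{-4})$, where $x\in\{\delta,\delta\pm p\}$. Regard $\hat\delta_\nu$ as a smooth function $F$ of the three magnitudes. The perturbation $\epsilon$ is $O(N^{-2})$ and smooth in $\delta$, so a first-order Taylor expansion of $F$ gives $\hat\delta_\nu-\delta=O(N^{-2})$, provided the gradient of $F$ stays bounded uniformly over the cell.

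\emph{Main obstacle: uniformity.} The gradient bound needs the denominator of \eqref{eq:ds-update} bounded away from zero uniformly in $\delta$. From the closed form, that denominator is $2ps\cos\pi p/\bigl(\delta(\delta^2-p^2)\bigr)$ times the normalizing factor. Since $s/\delta\ge 2$ on $|\delta|\le\tfrac12$, and $|\delta^2-p^2|$ is bounded on the cell, the denominator is bounded below by a positive constant depending only on $p$. This fails only as $p\to\tfrac12$, where $\cos\pi p\to0$, which is why $p<\tfrac12$ is required. The numerator's gradient is bounded because all magnitudes are smooth and bounded for $|x|<1$. This yields the $O(1/N^2)$ bound, uniform over $\delta\in(-\tfrac12,\tfrac12)$, with the constant depending on $p$ but not on $\delta$ or $N$.
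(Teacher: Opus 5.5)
Your route is the paper's own (Appendix~B): exact magnitudes $A|D(\delta-r)|$, exact cancellation for the $\operatorname{sinc}$ kernel, then an $O(1/N^{2})$ perturbation of the kernel. You go further than the paper, which only asserts uniformity in $\delta$, by trying to prove it. That extra step, however, rests on an incorrect computation. The $\cos\pi\delta\sin\pi p$ terms do not cancel. Write $a,b,m$ for your normalized magnitudes $\sin\pi x/x$ at $x=\delta-p,\ \delta+p,\ \delta$ (bins $\hat k+p,\hat k-p,\hat k$), and put $g(\delta)=p\sin\pi\delta\cos\pi p-\delta\cos\pi\delta\sin\pi p$. Then
\[
p(a-b)=\frac{2p\,g(\delta)}{\delta^{2}-p^{2}},\qquad
a+b-2m\cos\pi p=\frac{2p\,g(\delta)}{\delta(\delta^{2}-p^{2})}.
\]
So $\hat\delta_{\nu}=\delta$ holds because the common factor $g(\delta)$ cancels, not because the cross terms vanish; your displayed pair, as written, even has ratio $p\delta$. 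Moreover, $g$ vanishes at $\delta=0$ and at $\delta=\pm p$, exactly where $\delta(\delta^{2}-p^{2})$ does. The factored form therefore has removable singularities at all three points, not only at $\delta=0$.

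Consequently, your lower bound on the denominator via $\sin\pi\delta/\delta\ge 2$ does not apply to the true denominator. What you must show is that $g(\delta)/[\delta(\delta^{2}-p^{2})]$ keeps one sign and stays away from zero across the cell, including near those zeros. The identity itself gives a clean way to do this. Let $E=a+b-2m\cos\pi p$; exactness gives $E=p(a-b)/\delta$ for $\delta\neq 0$. Since $\sin\pi x/x$ is even and strictly decreasing in $|x|$ on $[0,1)$, and $|\delta-p|<|\delta+p|$ iff $\delta>0$ (both points lie in the main lobe because $|\delta|\le\tfrac12$, $p<\tfrac12$), you get $a>b$ iff $\delta>0$. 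Hence $E>0$ for $\delta\neq0$, and directly $E(0)=2(\sin\pi p-\pi p\cos\pi p)/p>0$. Since $E$ is continuous on the closed cell (it is built from the magnitudes, not from the factored form), $E\ge c_{p}>0$ there, and your Taylor argument then yields the uniform $O(1/N^{2})$ bound.

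Finally, the degeneration you attribute to $p\to\tfrac12$ is an artifact of the wrong formula. At $p=\tfrac12$ one has $E=\cos\pi\delta/(\tfrac14-\delta^{2})\ge\pi$. The condition $p<\tfrac12$ only serves the main-lobe property you already used, and $c_{p}$ in fact shrinks as $p\to 0$, since $E(0)\approx 2\pi^{3}p^{2}/3$.
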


\begin{IEEEproof}
See Appendix~B of the supplementary material.
\end{IEEEproof}

Lemma~\ref{lem:inverse} has two consequences. First, the estimator's
deterministic error is uniform over the cell before noise enters---of
order $1/N^{2}$ everywhere, so there
is no deterministic edge degradation to begin with. Second, the
iteration
$\hat{k}\leftarrow\hat{k}+\hat\delta_{\nu}$ has the true frequency as a
fixed point (a single step in the noiseless case). The analysis that
follows requires the iteration to reach that fixed point, which we
state as an explicit hypothesis.

\begin{assumption}[Local contraction]\label{ass:contraction}
In the SNR range $\gamma=-5$ to $-8$~dB (the target range of
Section~\ref{sec:experiments}), the alternating refinement is locally
contracting toward $(\nu_{c},\mu)$. In the absence of noise the
measured per-iteration contraction of the joint residual is below
$5\times10^{-2}$ uniformly over the cell (numerical certification,
worst case at the corner); at
the target SNRs the measured
per-iteration factor is $\xi\approx0.1$, so after $Q=4$ iterations the
\emph{systematic (mean) component} of the residual is
$\delta_{\mathrm{eff}}=O(\xi^{Q})\le10^{-4}$
regardless of the residual the cell position originally imposed. The
stochastic scatter of the intermediate iterates, of order
$\sqrt{\kappa_{\nu}\,\mathrm{CRB}_{\nu_{c}}}\,N$ bins, re-enters the
final update and is accounted for by the $O(1/(N\rho))$ term of
Theorem~\ref{thm:edgefree}. The
noisy-case contraction is certified numerically on a finite grid
(Section~\ref{ssec:fine}), and Theorem~\ref{thm:edgefree} is
conditional on it.
\end{assumption}

Under Assumption~\ref{ass:contraction}, the noise variance of the
final update is evaluated at
$\delta_{\mathrm{eff}}\approx0$, and the cell position drops out of the
leading term of the error budget. Computing that variance gives Theorem~\ref{thm:edgefree}.

\begin{theorem}[Cell-uniform fixed-point efficiency]\label{thm:edgefree}
Above threshold and under Assumption~\ref{ass:contraction}, once the
refinement of Algorithm~\ref{alg:joint} has
reached its fixed point, the linearized (leading-order) error model of
the estimator satisfies, for $\theta=\nu_{c}$ and all
$(\delta_{\nu},\delta_{\mu})\in[-\tfrac12,\tfrac12]^{2}$ including the
four corners,
\begin{equation}\label{eq:edgefree}
  \eta_{\theta}(\delta_{\nu},\delta_{\mu})
   = \frac{\operatorname{var}(\hat\theta)}{\mathrm{CRB}_{\theta}}
   = 1+\varepsilon_{\theta},
  \qquad
  |\varepsilon_{\theta}|\le\varepsilon_{\max}<\infty,
\end{equation}
where the bound
$\varepsilon_{\max}=\varepsilon_{0}
 +O(\xi^{2Q})+O\bigl(1/(N\rho)\bigr)$
is independent of the cell position:
$\varepsilon_{0}=\kappa_{\nu}(p,N)-1$ is the fixed-point variance
excess of the three-sample amplitude kernel, which converges to an
$N$-independent constant ($\varepsilon_{0}\approx3\times10^{-3}$ for
$p=0.35$); $O(\xi^{2Q})$ denotes the
finite-iteration residual, below $10^{-8}$ for $\xi\approx0.1$ and $Q=4$;
and $O(1/(N\rho))$ is a finite-SNR term, arising from the
amplitude nonlinearity of $|X|$, that decays as the SNR grows and
accounts for the mild overall rise of the measured efficiency as the
threshold is approached (worst values over the whole cell, $1.02$ at $30$~dB,
$1.07$ at $-5$~dB, and $1.10$--$1.15$ at $-8$~dB;
Section~\ref{sec:experiments}). The refinement samples the DTFT at the
interpolated fractional bin, so no scalloping loss enters, unlike in
the acquisition stage; the relevant coherent SNR here is
$N\rho$, distinct from $\Gamma=NL^{2}\rho$ at the acquisition stage of
Theorem~\ref{thm:threshold}.
The deterministic single-step residual of Lemma~\ref{lem:inverse} is a
separate \emph{bias} of $O(1/N^{2})$ bins, with a squared
contribution to the ratio growing as $O(\rho/N^{3})$; at $N=256$ the
residual reaches $1\%$ of the bound only beyond $+56$~dB. In
particular, the
fixed-point variance of the frequency axis is, in closed form,
\begin{align}
  \operatorname{var}(\hat\nu_{c})
   &= \frac{p^{2}\,\bigl(N-D(2p)\bigr)}
          {4N^{2}\rho\,\bigl[D(p)-N\cos\pi p\bigr]^{2}},
  \label{eq:kappa}\\
  \kappa_{\nu}(p,N)
   &= \frac{\operatorname{var}(\hat\nu_{c})}{\mathrm{CRB}_{\nu_{c}}},
  \nonumber
\end{align}
in which the right-hand side contains no $\delta$: the efficiency is
analytically independent of the cell position. Numerically
$\kappa_{\nu}(0.35,256)=1.003$.
The uniformity transfers to the native parameterization through the
linear map \eqref{eq:nuc}:
$\eta_{\nu}^{\mathrm{native}}
 =[\operatorname{var}(\hat\nu_{c})+c^{2}\operatorname{var}(\hat\mu)]
  /[\mathrm{CRB}_{\nu_{c}}+c^{2}\mathrm{CRB}_{\mu}]\approx1$
when $\eta_{\nu_{c}}\approx1$, the chirp-rate axis is likewise close to the bound (Remark~\ref{rem:muaxis}), and the centered estimates
are uncorrelated (Pillar~A of the proof)---all three observed
in simulation.
\end{theorem}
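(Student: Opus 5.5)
The plan is to linearize the final frequency-axis update \eqref{eq:ds-update} about the fixed point that Assumption~\ref{ass:contraction} provides. Once the variance is computed at zero residual, the residual appears only through higher-order terms. Write the noiseless dechirped samples at $\hat\mu=\mu$ as $S(\lambda)=A\,e^{j\alpha(\lambda)}D(\lambda-N\nu_c)$ up to a fixed phase. Write the noise samples as $W(\lambda)=\sum_n w[n]e^{-j\pi\mu(n-c)^2}e^{-j2\pi\lambda n/N}$; the dechirp is unimodular, so this is again white Gaussian noise transformed by a DTFT. By Assumption~\ref{ass:contraction} the systematic residual $\delta_{\mathrm{eff}}$ of the last update is $O(\xi^{Q})$. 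I therefore evaluate the update at $\hat k=N\nu_c+\delta_{\mathrm{eff}}$, expand to first order in both $\delta_{\mathrm{eff}}$ and the noise, and keep the $\delta_{\mathrm{eff}}$ contribution only as a remainder.

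\textbf{Step 1 (amplitude perturbation).} For $|S|\gg|W|$, use $|S+W|=|S|+\mathrm{Re}\{W e^{-j\arg S}\}+O(|W|^{2}/|S|)$. At $\delta=0$ the noiseless magnitudes are $|S(\pm p)|=A\,D(p)$ and $|S(0)|=AN$, so the numerator of \eqref{eq:ds-update} has zero mean. Its first-order fluctuation is $p\,(u_{+}-u_{-})$ with $u_{\pm}=\mathrm{Re}\{W(\pm p)e^{-j\arg S(\pm p)}\}$. The denominator equals $2A[D(p)-N\cos\pi p]$ plus noise. Because the numerator has zero mean, denominator noise enters only at second order and goes into the $O(1/(N\rho))$ term.

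\textbf{Step 2 (covariance).} Compute $\mathbb{E}u_\pm^2=N\sigma^2/2$ and $\mathbb{E}u_+u_-=\tfrac{\sigma^2}{2}\mathrm{Re}\{e^{j(\arg S(-p)-\arg S(p))}\sum_n e^{-j2\pi(2p)n/N}\}$. The linear phase $e^{-j2\pi\lambda c/N}$ carried by the signal samples cancels the off-center phase of the sum. This leaves the real kernel value $D(2p)$, which is the place where centering is used. Hence $\operatorname{var}(u_+-u_-)=\sigma^2(N-D(2p))$. Dividing by the squared denominator, and then by $N^2$ to convert bins to $\nu_c$, gives \eqref{eq:kappa}, which contains no $\delta$. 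Dividing by \eqref{eq:crb-nuc} gives $\kappa_\nu(p,N)$. Expanding $D(r)=N\,\mathrm{sinc}\,r+O(1/N)$ shows that $\kappa_\nu$ tends to an $N$-independent limit, and evaluating it at $p=0.35$ gives $1.003$ and $\varepsilon_0\approx3\times10^{-3}$.

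\textbf{Step 3 (remainders and cross-axis terms).} There are three remainders to bound.
\begin{itemize}
\item \emph{Finite iterations.} A nonzero $\delta_{\mathrm{eff}}$ perturbs the variance smoothly. The variance at $\delta=0$ is stationary in $\delta$ by the $\pm p$ symmetry, so the perturbation is $O(\delta_{\mathrm{eff}}^2)=O(\xi^{2Q})$.
\item \emph{Chirp-rate error.} An error $\hat\mu-\mu$ leaves the even residual chirp $e^{j\pi\Delta\mu(n-c)^2}$, as in Section~\ref{ssec:coarse}. This multiplies $|S(\pm p)|$ symmetrically, so it does not shift the numerator to first order. Its first-order effect is therefore zero, and the same symmetry (odd versus even in $n-c$) gives the vanishing correlation between $\hat\nu_c$ and $\hat\mu$ used in the native-coordinate transfer.
\item \emph{Amplitude nonlinearity.} The second-order terms of $|\cdot|$ and of the ratio have relative size $\mathbb{E}|W|^2/|S|^2=O(1/(N\rho))$.
\end{itemize}
Separately, the deterministic $O(1/N^2)$ bias of Lemma~\ref{lem:inverse} contributes $O(\rho/N^3)$ to the ratio.

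I expect the main obstacle to be justifying that the noise in the final update can be treated as independent of the random location $\hat k$ at which it is evaluated. Both come from the same data, so the intermediate scatter of order $\sqrt{\kappa_\nu\mathrm{CRB}}\,N$ bins feeds back into the update. I would first bound this feedback through a first-order Taylor expansion in the iterate perturbation. That bound is again quadratic in the noise, so it can be folded into the $O(1/(N\rho))$ term. Anything beyond that would rest on Assumption~\ref{ass:contraction}.
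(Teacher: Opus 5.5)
Your proposal is correct and follows essentially the paper's proof. The linearization at $\delta_{\mathrm{eff}}\approx0$, with in-phase noise components, $\operatorname{var}(u)=N\sigma^{2}/2$, $\mathbb{E}[u_{+}u_{-}]=\tfrac{\sigma^{2}}{2}D(2p)$ and a zero-mean numerator, is the paper's Pillar~B; your even-residual-chirp argument is its Pillar~A; and the $O(\xi^{2Q})$ and $O(1/(N\rho))$ remainders are handled the same way, with you supplying justifications the paper only asserts (evenness of the variance in $\delta$, and the odd/even orthogonality behind the zero cross-correlation).

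Two small corrections. First, the real value $D(2p)$ comes from the DTFT over $n=0,\dots,N-1$ and would appear equally with uncentered dechirping at $\hat\mu=\mu$, so centering is actually used only in your chirp-rate remainder. Second, the iterate-feedback term is quadratic in the noise specifically because Lemma~\ref{lem:inverse} makes the noiseless map $\hat k\mapsto\hat k+\hat\delta_{\nu}$ flat in $\hat k$; you should state this explicitly.
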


\begin{IEEEproof}
See Appendix~C of the supplementary material.
\end{IEEEproof}

\begin{remark}[Chirp-rate axis]\label{rem:muaxis}
The $\mu$ axis follows the same centered fixed-point architecture, and
Pillar~A, the decoupling of the two axes, holds for both axes
independently of the cell position. Its parabolic-vertex three-point interpolation is \emph{not} the algebraic inverse of the Fresnel-type
peak curve along $\mu$, so no counterpart of
Lemma~\ref{lem:inverse} (noiseless inversion) exists; at the
fixed point the curve is even in $\delta_{\mu}$, the
vertex is unbiased, and the linearization of Pillar~B carries over.
With $F(s)=\sum_{n}e^{-j\pi(s/N^{2})(n-c)^{2}}
=G(s)e^{j\varphi_{s}}$ the Fresnel sum of the residual chirp, the
fixed-point variance is, in closed form,
\begin{equation}\label{eq:kappa-mu}
  \operatorname{var}(\hat\delta_{\mu})
   = \frac{p^{2}\bigl[N-G(2p)\cos(2\varphi_{p}-\varphi_{2p})\bigr]}
          {16\rho\,\bigl[N-G(p)\bigr]^{2}}
   \quad\text{cells}^{2},
\end{equation}
in which, as in \eqref{eq:kappa}, no step involves $\delta_{\nu}$ or
$\delta_{\mu}$: the constant is cell-position independent.
Numerically $\kappa_{\mu}(0.35,256)
=\operatorname{var}(\hat\mu)/\mathrm{CRB}_{\mu}=0.998$, unity to
within the linearization order, and a single-update Monte Carlo
check at the fixed point gives $1.004$--$1.005$ ($10^{5}$ trials).
\end{remark}

\subsection{Robustness to Cubic Phase Mismatch}\label{ssec:jerk}

Finally we consider a departure from the second-order model
\eqref{eq:model}: the true phase contains a cubic term not included in the model, $\psi[n]=\phi_{0}+2\pi(\nu n+\tfrac12\mu n^{2}
+\tfrac{\zeta}{6}n^{3})$, with jerk $\zeta$ (cycles/sample$^{3}$),
while the estimator continues to fit two phase derivatives. Under a
cubic perturbation, the instantaneous frequency varies along the
signal and the native parameters no longer characterize it
globally, so bias is referenced to the instantaneous parameters at the signal center
$\nu_{c}^{\mathrm{inst}}=\nu+\mu c+\tfrac{\zeta}{2}c^{2}$ and
$\mu_{c}^{\mathrm{inst}}=\mu+\zeta c$---the quantities the centered
estimator reports.

\begin{proposition}[Jerk-induced bias and validity domain]\label{prop:jerk}
At high SNR, under the least-squares phase-projection analysis, the omitted cubic term induces, to first order in $\zeta$, the
deterministic biases
\begin{equation}\label{eq:jerk-bias}
  \mathrm{bias}(\hat\nu_{c})
    = \frac{\zeta\,(3N^{2}-7)}{120}
    \approx \frac{\zeta N^{2}}{40},
  \qquad
  \mathrm{bias}(\hat\mu) = 0,
\end{equation}
i.e., the entire first-order effect of the jerk falls on the frequency
axis; the centered chirp-rate estimate is insensitive to first order. The
second-order model remains adequate as long as the bias is below the
noise floor, which bounds the admissible jerk by
\begin{equation}\label{eq:jerk-domain}
  \zeta_{\max}(\rho)
   = \frac{120}{3N^{2}-7}\,\sqrt{\mathrm{CRB}_{\nu_{c}}(\rho)} .
\end{equation}
\end{proposition}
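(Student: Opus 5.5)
The plan is to reduce the omitted cubic term, in the centered frame of Section~\ref{ssec:centered}, to a phase perturbation that is small, odd and purely cubic. Its first-order effect on the fixed point of Algorithm~\ref{alg:joint} is then a least-squares projection onto the span of the FIM derivative weights $\{1,\,2\pi n',\,\pi n'^{2}\}$ of \eqref{eq:fim-centered}.

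\emph{Step 1 (centering the cubic).} Substitute $n=n'+c$ into the augmented phase. This gives
\begin{equation*}
  \tfrac{\zeta}{6}n^{3}=\tfrac{\zeta}{6}\bigl(n'^{3}+3cn'^{2}+3c^{2}n'+c^{3}\bigr).
\end{equation*}
The constant term is absorbed into $\phi_{c}$. The linear term $\tfrac{\zeta}{2}c^{2}n'$ shifts $\nu_{c}$ to $\nu_{c}^{\mathrm{inst}}$. The quadratic term $\tfrac12(\zeta c)n'^{2}$ shifts $\mu$ to $\mu_{c}^{\mathrm{inst}}$. Relative to the instantaneous centered parameters, the only model error is therefore the phase
\begin{equation*}
  \delta\psi[n]=2\pi\tfrac{\zeta}{6}\,n'^{3}.
\end{equation*}
This is why the biases are referenced to $(\nu_{c}^{\mathrm{inst}},\mu_{c}^{\mathrm{inst}})$.

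\emph{Step 2 (first-order perturbation of the estimate).} At high SNR the estimator sits at its fixed point (Assumption~\ref{ass:contraction}). There the symmetric magnitude samples balance on both axes, so the fixed point is a stationary point of $|X_{m}(\lambda)|$ in $(\lambda/N,m)$, i.e.\ of the concentrated ML criterion. I will linearize this stationarity condition in $\delta\psi$. Writing $s[n]=Ae^{j(\psi+\delta\psi)}$, the first-order shift of the maximizer is
\begin{equation*}
  \Delta\bm\theta_{c}=\mathbf{F}^{-1}\cdot 2\rho\sum_{n}\frac{\partial\psi}{\partial\bm\theta_{c}}\,\delta\psi[n].
\end{equation*}
This is exactly the least-squares projection of $\delta\psi$ onto the derivative weights. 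The amplitude is decoupled and does not enter.

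\emph{Step 3 (evaluate).} The cross-correlations of $n'^{3}$ with the weights $1$ and $\pi n'^{2}$ are odd sums, hence zero by \eqref{eq:odd-moments}. The only nonzero entry of the correlation vector is $2\pi\cdot2\pi\tfrac{\zeta}{6}S_{4}$ on the $\nu_{c}$ row. Because $\mathbf{F}$ has zeros in the $\nu_{c}$ row and column, this entry maps only to $\nu_{c}$, giving
\begin{equation*}
  \mathrm{bias}(\hat\nu_{c})=\frac{\zeta S_{4}}{6S_{2}}=\frac{\zeta(3N^{2}-7)}{120},
\end{equation*}
using $S_{4}/S_{2}=(3N^{2}-7)/20$. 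The $(\phi_{c},\mu)$ block receives zero input, so $\mathrm{bias}(\hat\mu)=0$.

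\emph{Step 4 (validity domain).} Require $|\mathrm{bias}(\hat\nu_{c})|\le\sqrt{\mathrm{CRB}_{\nu_{c}}(\rho)}$, using \eqref{eq:crb-nuc} for the right-hand side. Solving for $\zeta$ gives \eqref{eq:jerk-domain} directly.

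The main obstacle is Step~2: justifying that the selectable-$p$ frequency update and the parabolic chirp-rate update, at their fixed point, coincide to first order with the LS/ML stationary point rather than carrying a kernel-specific bias. I would first argue by symmetry. The frequency-axis fixed point equalizes $|X(\hat k\pm p)|$, and the chirp-rate vertex equalizes $H_{\pm p}$. An odd phase $n'^{3}$ tilts the spectrum only through a linear-phase-equivalent component, which is its projection on $n'$. The orthogonal remainder of $n'^{3}$ contributes to $|X|$ only at second order in $\zeta$. I would then confirm numerically that the finite-$p$ kernels introduce no additional $O(\zeta)$ term.
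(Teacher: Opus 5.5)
Your Steps~1, 3 and~4, together with the perturbation formula in Step~2, are correct and follow the paper's own proof. The supplement projects the centered residual phase $2\pi\frac{\zeta}{6}n'^{3}$ onto $\{1,n',n'^{2}\}$, uses the vanishing odd moments, and reads off $\frac{\zeta}{6}S_{4}/S_{2}$. Your expression $\mathbf{F}^{-1}\,2\rho\sum_{n}(\partial\psi/\partial\bm\theta_{c})\,\delta\psi[n]$ makes explicit why the high-SNR ML fit is that projection, and Step~1 makes explicit why the biases are referenced to $(\nu_{c}^{\mathrm{inst}},\mu_{c}^{\mathrm{inst}})$. One small point: orthogonality to the $n'^{2}$ weight needs $\sum n'^{5}=0$. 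That identity is not listed in \eqref{eq:odd-moments}, but the same symmetry gives it, and the paper states it explicitly. Since the proposition is stated under the least-squares projection analysis, this already proves it.

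What would fail is the bridge you add in Step~2 and in your last paragraph. You claim that the fixed point of Algorithm~\ref{alg:joint} is a stationary point of $|X_{m}(\lambda)|$, and that the remainder $r=n'^{3}-(S_{4}/S_{2})n'$ enters $|X|$ only at second order. Take the chirp rate at $\mu_{c}^{\mathrm{inst}}$, $\Delta=\nu_{c}^{\mathrm{inst}}-\lambda/N$ and $\epsilon=\pi\zeta/3$. The first-order term of $\sum_{n}e^{j2\pi\Delta n'}e^{j\epsilon n'^{3}}$ is then $-\epsilon\sum_{n}n'^{3}\sin(2\pi\Delta n')$. This term is real, because $n'^{3}$ is odd on a symmetric grid, so it is in phase with the real centered Dirichlet kernel and perturbs $|X|$ at first order.

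Orthogonality of $r$ to $n'$ only removes its slope at $\Delta=0$. So the peak of $|X|$ (the ML point) moves by exactly the projection to first order, but $\sum_{n}r\sin(2\pi\Delta n')\neq0$ at $\Delta=\pm p/N$. The frequency update's fixed point equalizes $|X(\hat{k}\pm p)|$ rather than zeroing the derivative. It therefore lands at the offset $\frac{\zeta}{6}\sum_{n}n'^{3}\sin(2\pi pn'/N)/\sum_{n}n'\sin(2\pi pn'/N)$, which reduces to $\frac{\zeta}{6}S_{4}/S_{2}$ only as $p\to0$. For large $N$ and $p=0.35$ this is about $0.976$ times the projection: an $O(\zeta)$ kernel-dependent discrepancy of roughly $2.4\%$. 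That is consistent with the $2.4$--$3.7\%$ deviation the paper reports, so your planned numerical check would find exactly the term you hope to rule out.

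The paper does not assert this bridge. It proves the projection and states that the estimator tracks it up to a small kernel-dependent factor, quantified by simulation.

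On the chirp-rate axis your conclusion does hold exactly for the algorithm, by a simpler argument. Conjugate the noiseless dechirped sum and substitute $n'\to-n'$. This gives $|X_{\mu_{c}^{\mathrm{inst}}+s/N^{2}}(\lambda)|=|X_{\mu_{c}^{\mathrm{inst}}-s/N^{2}}(\lambda)|$ for every $\lambda$ and every $\zeta$. Hence $H_{-p}=H_{p}$ at $\mu_{c}^{\mathrm{inst}}$ and the vertex update vanishes there.
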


\begin{IEEEproof}
See Appendix~D of the supplementary material.
\end{IEEEproof}

\begin{remark}[Scope of the robustness claim]\label{rem:jerk-honesty}
Two points limit the scope of the claim. First, the bias \eqref{eq:jerk-bias} is
that of the least-squares projection onto the second-order model,
which a second-order fit incurs, to first order in $\zeta$; the proposed
estimator tracks it to within $2.4$--$3.7\%$
(Section~\ref{sec:experiments}), and its validity domain matches that
of other second-order methods. Specific to the proposed
estimator is the centering. On the uncentered grid, the odd moments do not
vanish, and the jerk bias contaminates the chirp-rate axis as well.
The first-order immunity of $\hat\mu$ in \eqref{eq:jerk-bias} is thus
a third consequence of the centered coordinates, after the CRB
decoupling (Section~\ref{ssec:crb}) and the refinement decoupling
(Theorem~\ref{thm:edgefree}). Second, since the comparison of
admissible jerk ranges across published methods depends on their
respective signal lengths and sampling rates, \eqref{eq:jerk-domain}
is stated in normalized units only.
\end{remark}

\section{Numerical Results}\label{sec:experiments}

Unless stated otherwise, all experiments use $N=256$ samples, a signal
placed at $k_{0}=\operatorname{round}(0.2N)$ frequency bins and
$\ell_{0}=2$ chirp-rate cells, and the residual pair
$(\delta_{\nu},\delta_{\mu})$ of the \emph{centered} parameterization,
$\nu_{c}=(k_{0}+\delta_{\nu})/N$ and $\mu=(\ell_{0}+\delta_{\mu})/N^{2}$
(Section~\ref{ssec:cell}), swept as indicated. The proposed
estimator runs with the fixed configuration of
Section~\ref{sec:estimator} ($p=0.35$, $Q=4$,
$M_{\mathrm{fac}}=3$). The chirp-rate prior is
$\mu_{\max}=5\times10^{-4}$ (about $66$ chirp-rate cells wide at
$N=256$), sampled by a $K_{\mu}=64$-branch dechirp bank at
approximately one-cell spacing; the frequency prior is the full
bandwidth ($B_{\nu}=1$). All
methods search the same prior region and are run on the same noise
realizations.

Five baselines are compared.
\emph{(a)~A\&M-kernel variant and (b)~QSE kernel:} both run
in the same framework as the proposed estimator (the centering, the coarse estimation stage, and the chirp-rate three-point interpolation) and differ from it only in the
frequency kernel, so the comparison isolates the kernel alone.
Variant~(a) uses the half-bin complex-ratio update of
\cite{AboutaniosMulgrew2005}; variant~(b) uses the $q$-shift
interpolator of Solak \emph{et al.} \cite{Solak2022QSE2}, the
two-dimensional extension of the 1-D QSE \cite{Serbes2019QSE},
implemented from their published update and normalization
constant $c(q)$. The original two-dimensional implementation as published, uncentered,
unpadded, and with the chirp rate frozen at the coarse-grid resolution
$1/N^{2}$, is examined separately in Section~\ref{ssec:exp-kernel} as a
control. \emph{(c)~FrFT-GSS:} golden-section search over the chirp rate
\cite{AldimashkiSerbes2020} followed by iterative A\&M frequency
estimation on the dechirped signal. Two implementations are used:
the tolerance-stopped search of the published protocol, and a
strengthened variant that initializes the search from a $64$-point
pre-grid and evaluates the objective through parabolically refined
DTFT peaks on the unpadded $N$-point spectrum, which makes the
search converge reliably; the strengthened variant supplies the
accuracy curves, and
Section~S-II of the supplementary material accounts for both. \emph{(d)~2-D ML oracle:} Nelder--Mead
maximization of the two-dimensional periodogram initialized at the
true parameter values; it attains the CRB at both cell positions
from $-8$~dB upward and is represented there by the $\eta=1$ line
(below $-13$~dB even this reference departs from the bound); being initialized
with unavailable information, it is a conditional reference, not an
implementable competitor. \emph{(e)~DPT:} the discrete
polynomial-phase transform of Peleg and Friedlander
\cite{PelegFriedlander1995} with the variance-optimal delay
$\tau=N/2$; both the DPT$_{2}$ peak and the dechirped
spectrum peak are refined by local maximization of the DTFT
magnitude, and a
noiseless self-check reproduces their variance expressions
(their Eqs.~(42) and~(44)) to within $6\%$.

The
modified DCFT
(MDCFT) estimator of Song \emph{et al.} \cite{SongMDCFT2019} was also
implemented. The published method locates the quasi-peak on an
$N\times N$ transform grid and refines it by iterated interpolation
of MDCFT coefficients at half-bin offsets; our implementation keeps
the grid search and replaces the interpolation by a local ML
refinement of the peak. It
reaches the bound at the inner cell positions
($\eta_{\nu_{c}}=1.01$--$1.03$ at $\delta\in\{0,0.25\}$, $-5$~dB)
and, excluding outlier trials, $1.02$ at the cell corner. The corner
outlier rate is $1.1\times10^{-3}$, an order of magnitude above the
padded acquisition methods, consistent with the scalloping loss of its
unpadded $N\times N$ grid (Theorem~\ref{thm:threshold}). Its grid
cost is $O(N^{2}\log N)$, and it is omitted from the figures for
clarity.

All methods are evaluated on the common task of estimating the frequency at the signal center. The reported efficiency is
$\eta_{\nu_{c}}
 =\mathbb{E}\bigl[(\hat\nu_{c}-\nu_{c})^{2}\bigr]/\mathrm{CRB}_{\nu_{c}}$
with $\hat\nu_{c}=\hat\nu+c\hat\mu$, the MSE convention of
\eqref{eq:eta}, so that uncentered baselines are
measured against the same decoupled bound
(Section~\ref{ssec:crb}). All Monte Carlo experiments use
MC $=2\times10^{4}$ per point (the dedicated outlier count of
Section~\ref{ssec:exp-threshold} uses $3\times10^{4}$);
paired (common) noise realizations are used across methods. Where
method differences are at
the percent level, paired $95\%$ bootstrap confidence intervals
over the common-noise trials are quoted.

\subsection{Verification of Cell-Uniform Efficiency}\label{ssec:exp-uniform}

\begin{figure}[t]
  \centering
  \includegraphics[width=\columnwidth]{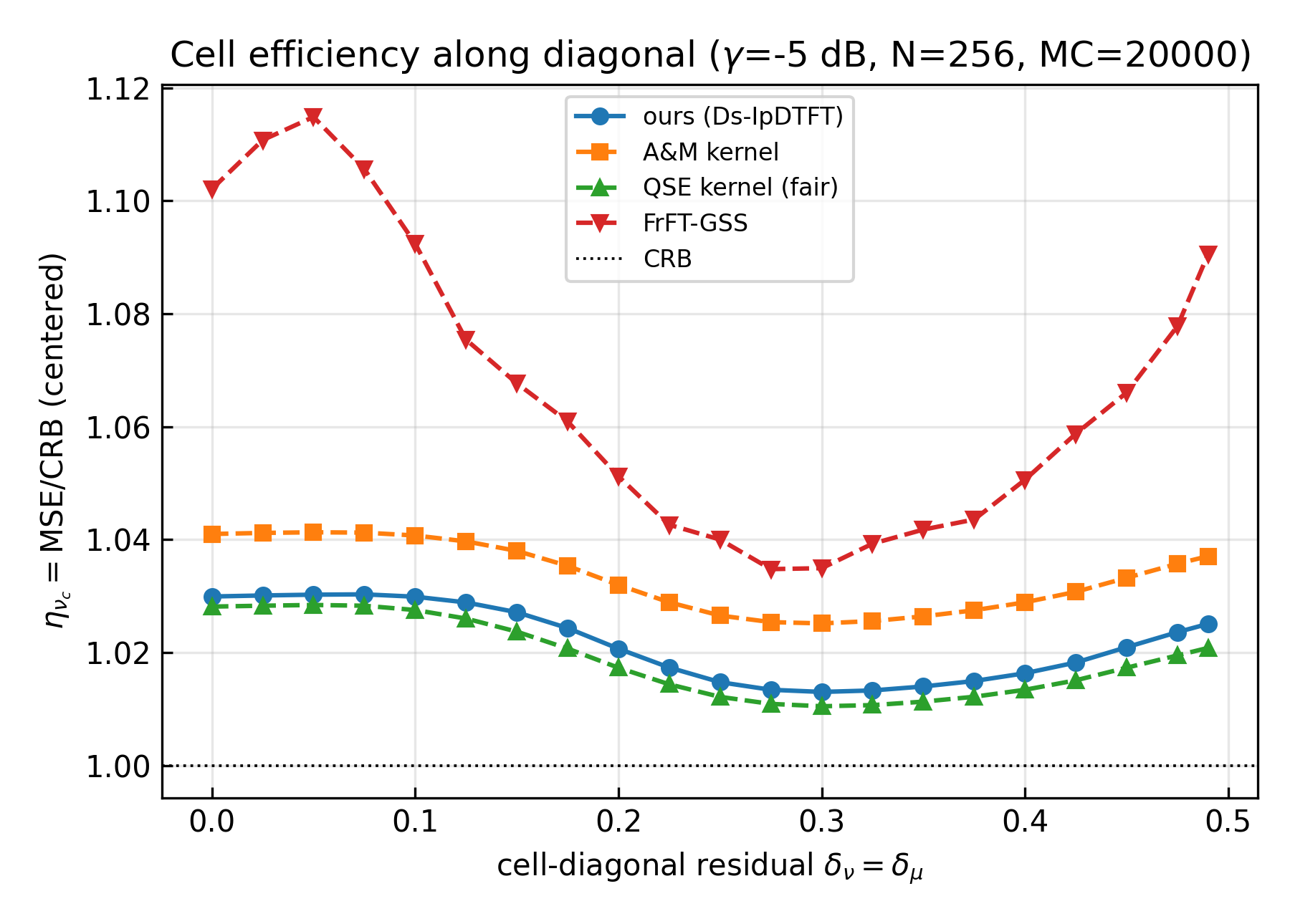}
  \caption{Efficiency $\eta_{\nu_{c}}$ along the cell diagonal
  $\delta_{\nu}=\delta_{\mu}$ at $\gamma=-5$~dB ($N=256$,
  MC $=2\times10^{4}$). The three kernels embedded in the
  same framework are
  flat and near the bound over the entire diagonal (the proposed
  estimator at $1.01$--$1.03$, the QSE kernel at $1.01$--$1.03$,
  and the A\&M-kernel variant at $1.03$--$1.04$), while FrFT-GSS, which
  inherits an uncentered frequency stage, sits $3.5$--$11.5\%$ above the
  bound. They are nearly indistinguishable, locating the uniformity in
  the centered, padded framework.}
  \label{fig:cell-diag}
\end{figure}

Fig.~\ref{fig:cell-diag} shows the cell-diagonal sweep at
$\gamma=-5$~dB, only $3$--$3.5$~dB above the threshold. The proposed
estimator is flat at $\eta_{\nu_{c}}=1.01$--$1.03$ from the cell
center to $(0.49,0.49)$. The uniformity asserted by
Theorem~\ref{thm:edgefree} thus holds at low SNR, where edge effects are
most damaging \cite{Wei2023DsIpDTFT}. The two other kernels embedded
in the same framework, the A\&M variant ($1.03$--$1.04$) and the QSE kernel
($1.01$--$1.03$), track it closely; the differences are $+1.1\%$ to $+1.2\%$ (A\&M)
and $-0.2\%$ to $-0.4\%$ (QSE) with $95\%$ CIs excluding zero, the
direction predicted by Remark~\ref{rem:p}. Quantitatively, evaluating the closed form
\eqref{eq:kappa} at $p=q=N^{-1/3}\approx0.157$ predicts a variance
$0.30\%$ below that at $p=0.35$, in agreement with the measured
difference: it indicates that the margin between the QSE
kernel and the proposed estimator is the effect of the smaller interpolation offset (Remark~\ref{rem:p}), not a property of the kernel form. This behavior stems from two
properties of the framework: the
padded coarse estimation stage confines the starting residual to
$|\delta|\le1/6$, where all three kernels are close to the bound
(Section~\ref{ssec:exp-kernel}); and the centering decouples
the axes, so a chirp-rate residual leaves the centered peak in
place and enters the native frequency only as $-c\,\Delta\mu$ to
leading order, which the mapping
$\hat\nu_{c}=\hat\nu+c\hat\mu$ removes.
FrFT-GSS, by contrast, lies $3.5$--$11.5\%$ above the bound, reflecting the
uncentered A\&M frequency stage it inherits. DPT is below
its breakdown threshold at this SNR
(Section~\ref{ssec:exp-threshold}) and is omitted here.

We compare the efficiency of the proposed estimator for each value
of the interpolation factor $p\in[0.05,0.49]$, at four SNRs from
$-7$~dB to $+30$~dB and at three cell positions; the resulting
estimators share the same coarse estimation stage, so their
differences lie in the fine estimation stage. If the margin between
the QSE kernel and the proposed estimator is the effect of the
smaller interpolation offset, the efficiency of these estimators
must follow $\kappa_{\nu}(p)$ (Fig.~S5 of the supplementary
material). The
measured curves track $\kappa_{\nu}(p)$: the prediction falls inside
the $95\%$ CI at $111$ of
the $120$ grid points, the total spread is $1.0$--$1.5\%$ against the
predicted $1.3\%$, and the near-threshold curves show no additional
$p$-dependence relative to $+30$~dB. At $p=q=N^{-1/3}$ the paired
change is $-0.14\%$ to $-0.37\%$, bracketing both the $-0.30\%$
prediction and the measured QSE margin above; on the chirp-rate axis
all differences stay below $0.02\%$. This confirms the
$p$-insensitivity asserted in Remark~\ref{rem:p}: over the range
$p\in[0.05,0.49]$, the loss of the fixed $p=0.35$ with respect to
the best value of $p$ is below $0.4\%$ in variance at all four SNRs
and three cell positions. The estimators differ instead in uniformity and
threshold behavior, examined next.

\begin{figure*}[t]
  \centering
  \includegraphics[width=\textwidth]{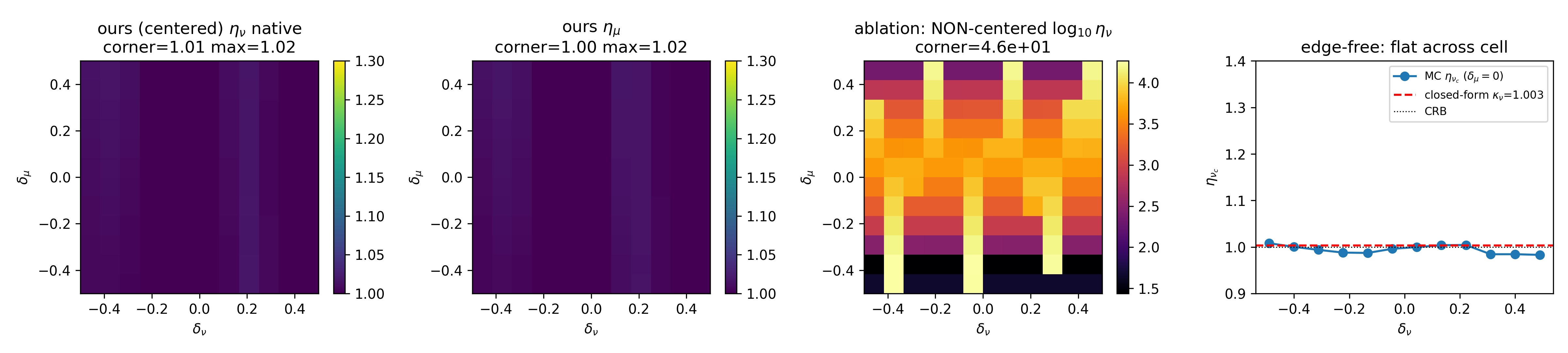}
  \caption{Uniformity over the full $12\times12$ residual cell
  ($N=256$, $30$~dB, MC $=2\times10^{4}$). Left to right: native
  $\eta_{\nu}$ of the proposed estimator (max $1.02$, corners
  included); $\eta_{\mu}$ (max $1.02$); ablation, the identical
  kernel and schedule without centering, shown as
  $\log_{10}\eta_{\nu}$ (range $27$--$1.8\times10^{4}$); and the
  $\delta_{\mu}=0$ slice of $\eta_{\nu_{c}}$ against the closed-form
  fixed-point efficiency $\kappa_{\nu}=1.003$ of
  Theorem~\ref{thm:edgefree} (dashed), which carries no
  $\delta$-dependence and no fitted correction.}
  \label{fig:edgefree}
\end{figure*}

Fig.~\ref{fig:edgefree} validates Theorem~\ref{thm:edgefree} over the
full cell. The efficiency surfaces of the proposed estimator
are flat ($\eta_{\nu}\le1.02$ and $\eta_{\mu}\le1.02$ across the
residual cell), and the measured surface agrees with the
analytical constant $\kappa_{\nu}=1.003$ of \eqref{eq:kappa} with
mean deviation $1.0\%$.
The third panel is the centering ablation (Remark~S2 of the supplementary material): with
centering removed and everything else identical,
$\eta_{\nu}$ spans $27$ to $1.8\times10^{4}$ across the cell. The
best uncentered cell is still $27\times$ worse than the worst
centered one.

The closed forms of Theorem~\ref{thm:edgefree} and
Lemma~\ref{lem:inverse} carry over to other signal lengths
(Fig.~S6 of the supplementary material): the fixed-point efficiency $\kappa_{\nu}$
converges to a small $N$-independent constant
($\kappa_{\nu}-1\approx3\times10^{-3}$), and the
noiseless single-step inversion residual decays as $O(1/N^{2})$ over
$N=64$--$4096$---below $4\times10^{-6}$ at $N=256$, negligible
against any noise floor, and driven to machine precision by the
subsequent iterations.

\begin{figure*}[t]
  \centering
  \includegraphics[width=\textwidth]{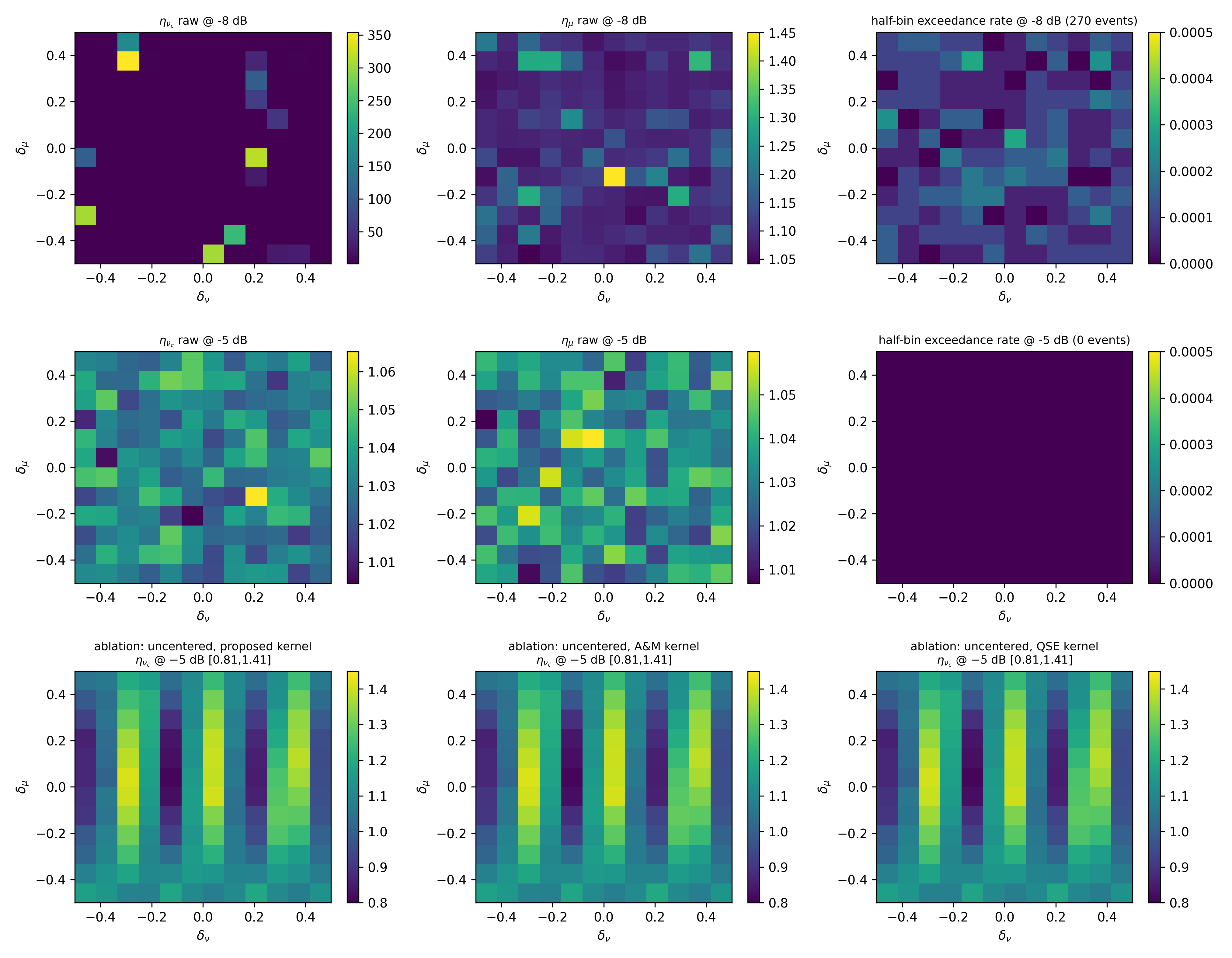}
  \caption{Uniformity over the whole cell at $-8$ and $-5$~dB and the
  centering ablation ($N=256$,
  $12\times12$ grid over $[-0.49,0.49]^{2}$, MC $=2\times10^{4}$ per
  cell, disjoint noise seeds per cell). Rows 1--2, proposed
  estimator at $-8$~dB ($0.5$~dB above the center onset, at the
  corner onset) and $-5$~dB: $\eta_{\nu_{c}}$,
  $\eta_{\mu}$, and the fraction of trials with a centered-frequency
  error exceeding half a bin. At $-5$~dB, no such event occurs in
  $2.88\times10^{6}$ trials and the surfaces are flat: median $1.03$
  on both axes,
  worst of $144$ cells $1.07$ ($\nu_{c}$) and $1.06$ ($\mu$). At $-8$~dB, $270$ outlier events ($9.4\times10^{-5}$, over $128$ cells) dominate the raw
  MSE of the cells
  they fall in; excluding them, the median is $1.07$ and the worst
  cell $1.10$ ($\nu_{c}$) and $1.15$ ($\mu$). Row 3, the centering
  ablation at $-5$~dB: removing the centering (all else identical)
  produces the \emph{same} surface for all three kernels---proposed,
  A\&M, and QSE agree cell by cell to within $0.7\%$ on
  $\eta_{\nu_{c}}$ and $3.3\%$ on $\eta_{\mu}$
  ($\eta_{\nu_{c}}$ medians $1.11$/$1.11$/$1.11$, oscillating over
  $[0.81,1.41]$; $\eta_{\mu}$ medians $2.23$/$2.24$/$2.23$)---while
  their centered counterparts are all flat (worst values $1.07$/$1.07$/$1.06$). Both signs of both
  residuals are covered.}
  \label{fig:lowsnr-surface}
\end{figure*}

The $30$-dB surfaces of Fig.~\ref{fig:edgefree} isolate the
refinement stage; the first two rows of
Fig.~\ref{fig:lowsnr-surface} repeat the sweep over the whole cell at $-8$ and $-5$~dB. At
$-5$~dB the surfaces are flat over the entire cell, including the
negative-residual quadrants: median $\eta_{\nu_{c}}$ and $\eta_{\mu}$
both $1.03$, worst of the $144$ cells $1.07$, and no
centered-frequency error exceeded half a bin in $2.88\times10^{6}$
trials.
At $-8$~dB---$0.5$~dB above the center onset and essentially at the
corner onset---rare
acquisition near-misses appear ($270$ half-bin outlier events, $9.4\times10^{-5}$) and dominate the MSE of the cells they fall in;
excluding them, the simulation gives a median of $1.07$ and a worst value of $1.10$ ($\nu_{c}$) and $1.15$ ($\mu$). The uniformity of
Theorem~\ref{thm:edgefree} and Remark~\ref{rem:muaxis} thus holds at
the target SNRs, with only rare outlier events near threshold.

The third row of Fig.~\ref{fig:lowsnr-surface} extends the centering
ablation (Remark~S2 of the supplementary material) to the other
kernels. They
produce the same degradation cell by
cell (to within $0.7\%$ on $\eta_{\nu_{c}}$, $3.3\%$ on
$\eta_{\mu}$): the chirp-rate MSE roughly doubles
($\eta_{\mu}$ median ${\approx}2.2$, up to $5.2$), and
$\eta_{\nu_{c}}$ oscillates over $[0.81,1.41]$ about unity. The
sub-unity values are inherent to a biased estimator.
Without centering, the chirp-rate refinement acquires a
small systematic offset. Their centered
counterparts are all flat. The uncentered penalty
grows with SNR: at $-5$~dB it roughly doubles the chirp-rate
MSE, while at $30$~dB it reaches
$\eta_{\nu}\in[27,1.8\times10^{4}]$, as shown in
Fig.~\ref{fig:edgefree}.
Uniform efficiency in the joint problem is thus a property of the
centering, not of the interpolation kernel.

\subsection{Verification of the Threshold Characterization}\label{ssec:exp-threshold}

\begin{figure*}[t]
  \centering
  \includegraphics[width=\textwidth]{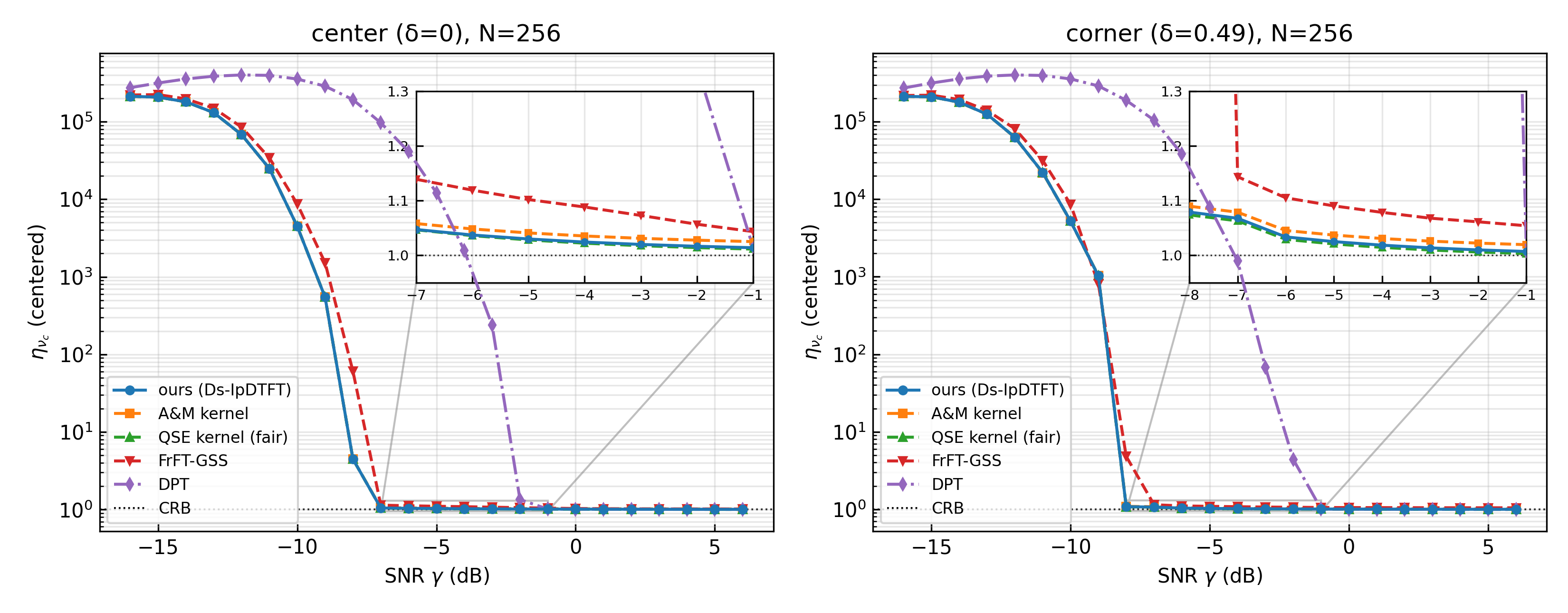}
  \caption{Efficiency versus SNR at the cell center (left) and corner
  (right) ($N=256$, MC $=2\times10^{4}$ per point). The four
  acquisition-based methods leave the
  prior floor between $-10$ and $-8$~dB and follow the bound up to
  $+6$~dB at both cell positions, with closely matched thresholds at
  the center and the corner, the cell-position independence that the
  prescribed padding secures (Theorem~\ref{thm:threshold}); the
  phase-based DPT leaves the floor only at about $-2$~dB and
  approaches the bound above its own threshold. Inside the
  transition, at $-9$~dB, the acquisition-based methods are at
  ${\sim}10^{3}$ (outlier-dominated MSE); by $-8$~dB, at the corner, the three estimators with a fixed operation count are within $10\%$ of the bound while
  FrFT-GSS remains at ${\approx}4.8$ (its higher
  near-threshold outlier rate, quantified in the text); the $-8$~dB
  center values remain outlier-dominated, cf.\ the exceedance maps
  of Fig.~\ref{fig:lowsnr-surface}. Insets: linear-scale zoom of the
  transition exit, $-7$ to $-1$~dB at the center and $-8$ to $-1$~dB
  at the corner.}
  \label{fig:rmse-snr}
\end{figure*}

\begin{figure}[t]
  \centering
  \includegraphics[width=\columnwidth]{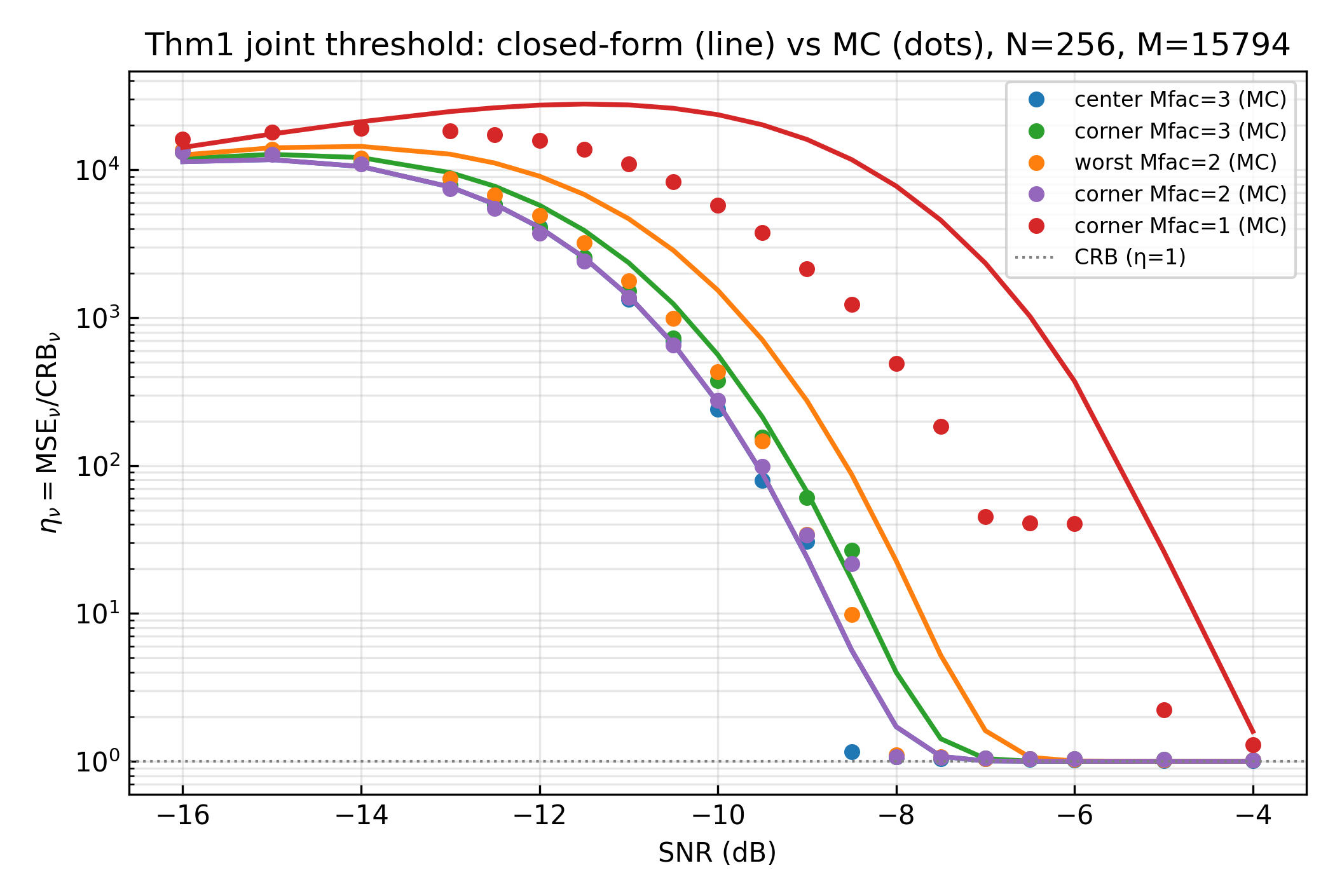}
  \caption{Theorem~\ref{thm:threshold} against Monte Carlo
  ($N=256$, MC $=2\times10^{4}$ per point). Lines: closed form
  \eqref{eq:three-segment}--\eqref{eq:gth} with the single fitted
  constant $\mathcal{M}=15794$ (calibrated at the center, $M_{\mathrm{fac}}=3$);
  dots: measured $\mathrm{MSE}_{\nu}/\mathrm{CRB}_{\nu}$. The four
  remaining curves are independent predictions through the scalloping
  loss \eqref{eq:straddle}: the measured onset
  advances at reduced padding and off-center positions track the
  predicted scalloping shifts, quantified against the higher-precision
  onset measurement in the text.}
  \label{fig:threshold}
\end{figure}

Fig.~\ref{fig:rmse-snr} sweeps the SNR from $-16$ to $+6$~dB at the
cell center and corner. With padding, the threshold of
the proposed estimator is nearly the same at the center and the corner
(${\approx}{-8.5}$ and ${\approx}{-8.0}$~dB), and the cell-position
dependence is predicted by
Theorem~\ref{thm:threshold} via the scalloping loss. Above
threshold, the proposed estimator, its A\&M and QSE variants, and
FrFT-GSS track the bound on the centered-frequency
axis to $+6$~dB (FrFT-GSS with the $3.5$--$11.5\%$ excess of
Fig.~\ref{fig:cell-diag}), apart from the rare outlier
excursions. The phase-based DPT breaks
down much earlier: it deviates from the bound at
${\approx}{-2.0}$~dB (center) and ${\approx}{-1.3}$~dB
(corner), some
$6$--$7$~dB above the other methods, although it reaches the bound
above its own threshold.

The near-threshold outlier statistics of the coarse estimation stage are
consistent with the prediction of Theorem~\ref{thm:threshold}. At the
worst case, the $(0.49,0.49)$ corner at $-8$~dB, just above
threshold, a direct count over $3\times10^{4}$ trials gives an
outlier probability (defined as a centered-frequency error exceeding
half a bin) of $2.3\times10^{-4}$ for FrFT-GSS and $1.0\times10^{-4}$
for each of the three estimators with a fixed operation count
($95\%$ Clopper--Pearson intervals $[0.9,4.8]\times10^{-4}$ and
$[0.2,2.9]\times10^{-4}$, respectively); an exact test on the paired
counts ($7$ versus $3$ events, two-sided $p\ge0.13$) does not reject
equal rates, and no method is
outlier-free near threshold. The determinism of the proposed schedule
fixes the operation count, not the outlier probability (Section~S-II
of the supplementary material).

Fig.~\ref{fig:threshold} tests the closed forms of
Theorem~\ref{thm:threshold}: the lines are the three-segment form
\eqref{eq:three-segment}--\eqref{eq:gth} with the single constant
$\mathcal{M}=15794$ calibrated once on the center
($M_{\mathrm{fac}}=3$) curve, under the protocol of
Remark~\ref{rem:thm1-honesty}; everything else is prediction. The
plotted ratio is the \emph{native}
$\mathrm{MSE}_{\nu}/\mathrm{CRB}_{\nu}$ \eqref{eq:crb-native}, the quantity at the acquisition stage; for the centered estimator
$\mathrm{MSE}_{\nu}=\mathrm{MSE}_{\nu_{c}}+c^{2}\,\mathrm{MSE}_{\mu}$,
so the three-segment form carries over, with the floor set by
$\Sigma_{\nu}=B_{\nu}^{2}/12$ (the $c^{2}B_{\mu}^{2}/12$ contribution
is below $2\%$, and $\eta$ floors are ${\approx}16\times$ lower
than in the centered convention of Fig.~\ref{fig:rmse-snr}).

Three features of the figure carry the verification. First,
$\Sigma_{\nu}=1/12$ predicts a no-information floor of
$\eta=1.46\times10^{4}$,
against $1.34\times10^{4}$ measured at $-16$~dB. Second, the predicted
onset of the threshold region (the SNR at which $\eta$ crosses $2$
from above) is higher than the measured onset by $0.5$~dB on the
calibration curve and by $-0.0$ to $+1.0$~dB on the four configurations not used in the calibration, consistent with
Remark~\ref{rem:thm1-honesty}; the measured
crossings themselves carry
$0.4$--$0.95$~dB bootstrap intervals, because near threshold the
MSE is dominated by rare outliers. Third, the cell-position dependence
follows the predicted scalloping shifts: the unpadded corner (scalloping
loss $10\log_{10}\operatorname{sinc}^{2}(0.49)\approx-3.8$~dB)
advances the measured onset by $3.8$~dB, and the padded
configurations advance it by
$0.46$--$0.50$~dB with $0.5$--$0.95$~dB confidence intervals,
bracketing the predicted $0$--$0.99$~dB differences, which are not
individually resolved at this precision. One calibrated constant
thus reproduces the floor, the onset, and the position dependence
of all five curves. The cell-position shifts test the $L^{2}$ term in
\eqref{eq:gth} independently of the calibrated constant, and $M_{\mathrm{fac}}=3$ keeps the measured cell-position spread at
${\approx}0.5$~dB, consistent with its $0.4$~dB deterministic
scalloping cap (Section~\ref{ssec:coarse}) to within
onset-estimation noise.

\subsection{Kernel Ablation: Edge Regime and Reduced Padding}
\label{ssec:exp-kernel}

Section~S-IV of the supplementary material isolates the frequency
kernel in the identical centered, padded framework. At the target SNRs the
selectable-$p$ kernel is the better of the two at every tested cell
position ($1.025$ against $1.037$ at the $-5$~dB corner, $1.079$
against $1.090$ at $-8$~dB), but within the padded two-stage estimator the gap is only ${\approx}1\%$, because the zero-padded coarse estimation stage already
confines the refinement residual to $|\delta|\le1/6$, where even the
half-bin samples used in the A\&M kernel remain inside the main lobe.
Two control experiments separate the kernel from the framework. The
selectable-$p$, A\&M and QSE kernels coincide when placed in that same
framework, so the deviation of the QSE2 implementation as published is a property of its structure and not of its kernel. Reducing the padding factor degrades
both kernels together, driven by the coarse-stage scalloping loss and
the incipient outliers of Theorem~\ref{thm:threshold}, so the kernel
choice does not compensate for reduced padding.

\subsection{Comparison with the Closest Prior Joint Interpolator}
\label{ssec:exp-jiang}

The estimator most closely related to ours is that of Jiang and Le
\cite{Jiang2013}, to our knowledge the only prior interpolation
method that operates directly on the dechirped DTFT/likelihood
samples; interpolation-based refinements in transform
domains---FrFT \cite{SongFrFT2013,Liu2018FrFT}, modified DCFT
\cite{SongMDCFT2019}, and ambiguity function \cite{Feng2012DAF}---%
instead recover the parameters through
1-D estimators after the transform-domain search discussed in
Section~\ref{sec:intro}. We implemented the architecture from
their published equations---a coarse trial search over the chirp rate,
a single three-point parabolic interpolation of the log-likelihood on
that axis, and a plug-in single-tone frequency estimator, for which we
use the A\&M estimator \cite{AboutaniosMulgrew2005} as the paper leaves
the inner estimator unspecified. In the noiseless limit, the frequency
is recovered exactly, while the single parabolic pass leaves a
chirp-rate residual of ${\approx}0.06$ cell; the proposed $Q$-iteration
refinement drives the same quantity to machine precision. Under noise,
on the centered-frequency axis, the Jiang--Le implementation reaches efficiency close to the CRB above its threshold but sits uniformly
about $1\%$ above the proposed estimator across $-6$ to $+6$~dB, at
both cell positions (differences $+1.0\%$ to $+1.3\%$, all
$95\%$ CIs excluding zero; $N=256$, MC $=2\times10^{4}$). The substantive distinction is
structural rather than numerical: their chirp-rate axis is localized by
an explicit search, their thresholds are reported only by simulation, and
neither the cell-uniformity nor the threshold behavior is analyzed---%
the gaps this paper addresses.

\subsection{Further Verification}\label{ssec:exp-further}

Three further studies are reported in the supplementary material.
First, at $N=32$, $64$, and $512$ with the
chirp-rate search range fixed in physical units, the center and corner efficiencies of the
proposed estimator and of its A\&M variant stay within $6\%$ of the
bound and the measured threshold onsets at the two cell positions
are within $0.5$~dB of each other, so the cell uniformity of
Theorem~\ref{thm:edgefree} and the threshold equalization of
Section~\ref{ssec:coarse} carry over to these signal lengths; the onset moves
by $11.9$~dB from $N=32$ to $N=512$, against the $12.0$~dB predicted
by the $1/N$ scaling of \eqref{eq:gth} (Section~S-I). Second, the measured
runtimes verify the $O(N\log N)$ scaling for all these estimators, and the two FrFT-GSS implementations bracket
the proposed estimator (Section~S-II). Third, under a cubic phase mismatch the degradation of the frequency estimate follows the closed form
\eqref{eq:jerk-bias} to within $2.4$--$3.7\%$, while the chirp-rate
estimate shows no bias trend over the same range, as
Proposition~\ref{prop:jerk} predicts (Section~S-III).

\section{Conclusion}\label{sec:conclusion}

This paper treated the joint estimation of frequency and chirp rate as
a uniformity problem, because the residual depends on the parameters
to be estimated and a reliable
two-stage estimator must therefore maintain its
accuracy and its threshold at the worst cell position. The
proposed estimator addresses both failure modes by centering the
coordinates and padding the acquisition. The
centering removes the frequency--chirp-rate cross-term
of the Fisher information that couples the two parameters in the natural parameterization; the zero-padded acquisition bank equalizes the detection
threshold across the cell; and the selectable-$p$ kernel, operating on DTFT samples at fractional bins, inverts the Dirichlet kernel
throughout the residual cell and keeps its samples in the main lobe
down to the cell edge, a benefit that is modest within the padded
pipeline. The supporting theory
consists of two closed forms: an MSE and threshold
characterization across the full SNR range with a single calibrated
cell count, and
variance ratios on both axes, independent of
the cell position. A
closed-form jerk-bias analysis shows the chirp-rate estimate to be insensitive, to first order, to a cubic phase mismatch, a property
specific to the centering. Monte Carlo experiments corroborated the
theoretical predictions. The frequency-axis efficiency stays within $7\%$
of the bound over the whole cell at $-5$~dB, which is $3.5$~dB above
the approximately $-8.5$~dB threshold at $N=256$; the threshold
predictions fall within $1.0$~dB on four configurations outside the calibration, from a single calibrated constant. For
low-SNR burst applications such as satellite telemetry and the
detection of highly dynamic targets in radar and sonar systems, the
proposed estimator provides
predictable latency, a schedule fixed in advance, and accuracy
independent of the residual-cell position of the true parameters.

The analysis is confined to a single-component second-order model;
two extensions remain open. Multiple chirp components
motivate an estimate-and-subtract front end, and its interaction with the
threshold theory is open; damped chirps and very short signals change
the spectral kernel and the cancellations on which the
centered analysis rests. Both directions can reuse the framework
developed here: conditional detection probability for the
acquisition stage and fixed-point noise gain for the refinement
stage; both are under investigation.
\emph{Reproducibility.} The code that reproduces every figure
in this paper will be made publicly available upon publication.

\bibliographystyle{IEEEtran}
\bibliography{refs}

\clearpage
\setcounter{section}{0}\setcounter{figure}{0}\setcounter{table}{0}
\setcounter{equation}{0}\setcounter{theorem}{0}\setcounter{lemma}{0}
\setcounter{proposition}{0}\setcounter{corollary}{0}\setcounter{remark}{0}
\setcounter{assumption}{0}
\renewcommand{\thesection}{S-\Roman{section}}
\renewcommand{\thefigure}{S\arabic{figure}}
\renewcommand{\thetable}{S\arabic{table}}
\renewcommand{\theremark}{S\arabic{remark}}
\begin{center}\large\textbf{Supplementary Material}\end{center}

This document contains the proofs of Theorem~1, Lemma~1, Theorem~2 and
Proposition~1 of the main paper, together with three supporting
experiments: validation at other signal lengths, the runtime and
complexity comparison, and the unmodeled-jerk study. Equation, figure
and section numbers without the prefix \textup{S} refer to the main
paper.

\section{Validation at Other Signal Lengths}\label{ssec:exp-nvalid}

\begin{table}[t]
\caption{Validation at other signal lengths ($\mu_{\max}\approx32/N^{2}$,
the fixed-duration convention of
Section~\ref{ssec:exp-complexity};
$\gamma_{\mathrm{op}}=-5+10\log_{10}(256/N)$~dB; MC $=2\times10^{4}$
throughout; $N=256$ row from
Figs.~\ref{fig:cell-diag} and~\ref{fig:threshold}).}
\label{tab:nvalid}
\centering
\begin{tabular}{ccccccc}
\hline
 & & \multicolumn{2}{c}{proposed, $\eta_{\nu_{c}}$}
   & \multicolumn{2}{c}{A\&M variant, $\eta_{\nu_{c}}$}
   & onset (dB) \\
$N$ & $\gamma_{\mathrm{op}}$ & center & corner & center & corner
   & center\,/\,corner \\
\hline
$32$  & $+4$ & $1.02$ & $1.05$ & $1.03$ & $1.06$ & $+0.9$\,/\,$+0.9$ \\
$64$  & $+1$ & $1.04$ & $1.03$ & $1.06$ & $1.05$ & $-2.0$\,/\,$-2.1$ \\
$256$ & $-5$ & $1.03$ & $1.03$ & $1.04$ & $1.04$ & $-8.5$\,/\,$-8.0$ \\
$512$ & $-8$ & $1.02$ & $1.05$ & $1.03$ & $1.06$ & $-11.0$\,/\,$-11.0$ \\
\hline
\end{tabular}
\end{table}

Table~\ref{tab:nvalid} repeats the two headline measurements at
$N=32$, $64$, and $512$ with the chirp-rate search range fixed in physical units,
with the SNR shifted so that the distance to the predicted
threshold is the same at each length. The center and corner
efficiencies of the proposed estimator and of its A\&M variant remain
within $6\%$ of the bound at all three lengths and both cell
positions, and the
measured threshold onsets at the center and the corner are
within $0.5$~dB of each other, so the cell uniformity of
Theorem~\ref{thm:edgefree} and the threshold equalization of
Section~\ref{ssec:coarse} carry over to these signal lengths. The onset
itself moves by $11.9$~dB from $N=32$ to $N=512$, matching the
$12.0$~dB that the
$1/N$ scaling of \eqref{eq:gth} predicts for the $16\times$ length
ratio.

\section{Runtime and Complexity Comparison}\label{ssec:exp-complexity}

\begin{figure}[t]
  \centering
  \includegraphics[width=\columnwidth]{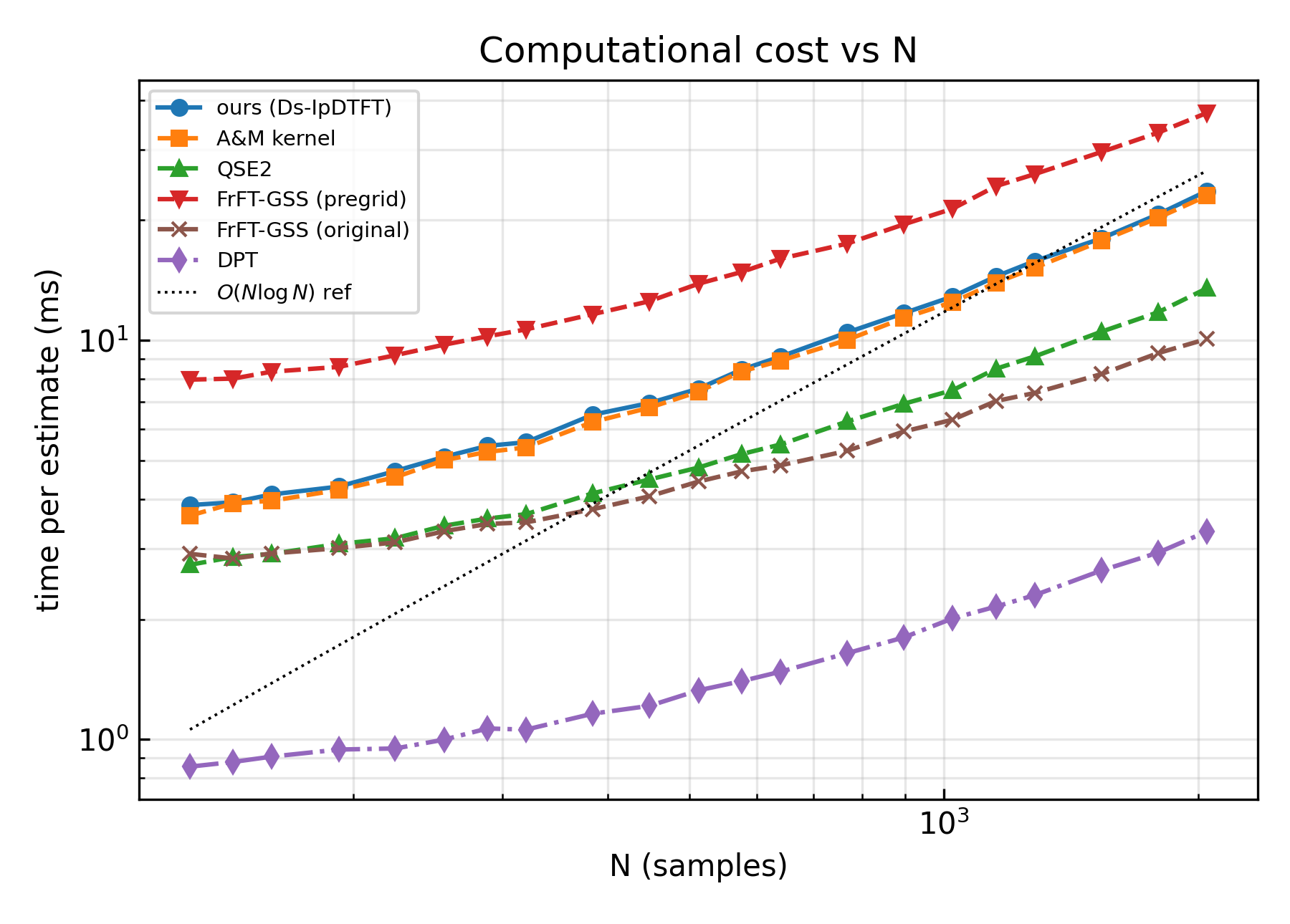}
  \caption{Runtime per estimate versus $N$: per signal the minimum
  of three repetitions is kept (suppressing scheduler jitter), and
  the median across $100$ signals is plotted; identical hardware and
  implementation
  framework; chirp-rate search range fixed in physical units, so the branch count is
  constant across $N$; FFT-friendly lengths $N=2^{a}m$,
  $m\in\{1,3,5,7,9\}$. The proposed
  estimator, its A\&M variant, and the QSE2 transplant scale as
  $O(N\log N)$ (dotted reference, least-squares fit); the two
  FrFT-GSS variants (tolerance-stopped protocol and strengthened
  pre-grid implementation) bracket the
  proposed estimator. DPT is the cheapest
  method tested.}
  \label{fig:complexity}
\end{figure}

Fig.~\ref{fig:complexity} reports measured runtimes, which serve
here to verify scaling: all
methods track the $O(N\log N)$ reference; at small $N$
the measured times lie above it, dominated by the per-call overhead
of the interpreted implementation. The proposed schedule prescribes
the number and set of transform evaluations in advance; the
golden-section search evaluates fewer transforms under its
tolerance-stopped protocol, at the cost of an inherently sequential
count that varies with the realization, while the strengthened
variant trades
additional evaluations for global acquisition. The naive QSE2 without
padding and the chirp-rate refinement is the cheapest of the three
variants, but forfeits threshold
equalization at the cell corner and cell-uniform efficiency
(Sections~\ref{ssec:exp-threshold}
and~\ref{ssec:exp-kernel}). DPT is
the cheapest method tested, at the threshold cost
quantified in Section~\ref{ssec:exp-threshold}. The MDCFT
baseline (not
plotted) is dominated by its $N\times N$ transform grid,
$O(N^{2}\log N)$, an order of magnitude above the fixed-schedule
methods at
$N=1024$. All curves are single-threaded NumPy implementations of
comparable optimization effort on the same workstation (Intel Core
i9-9880H, $32$~GB; Python~3.12, NumPy~2.4;
\texttt{perf\_counter} after one warm-up call per method); absolute
times
are implementation-dependent, and the informative content is the
relative ordering and the scaling with $N$ at the stated prior width.
Fixing the search range in physical units is the fixed-duration convention: the
observation window $T$ is held fixed while the sampling rate grows
with $N$, so
$\dot f_{\max}=\mu_{\max}f_{s}^{2}=32/T^{2}$ is constant and the
branch count is constant across $N$. At a fixed sampling rate, the
same physical range would span $O(N^{2})$ cells and the bank cost
would grow accordingly (Section~\ref{ssec:complexity}). The medians
hide one further
difference: because its algorithmic parameters are fixed,
independent of the parameters to be estimated, the proposed
estimator has constant complexity and identical runtime on each
run, whereas the golden-section search time varies from signal
to signal.

\section{Robustness to Unmodeled Jerk}\label{ssec:exp-jerk}

\begin{figure*}[t]
  \centering
  \includegraphics[width=\textwidth]{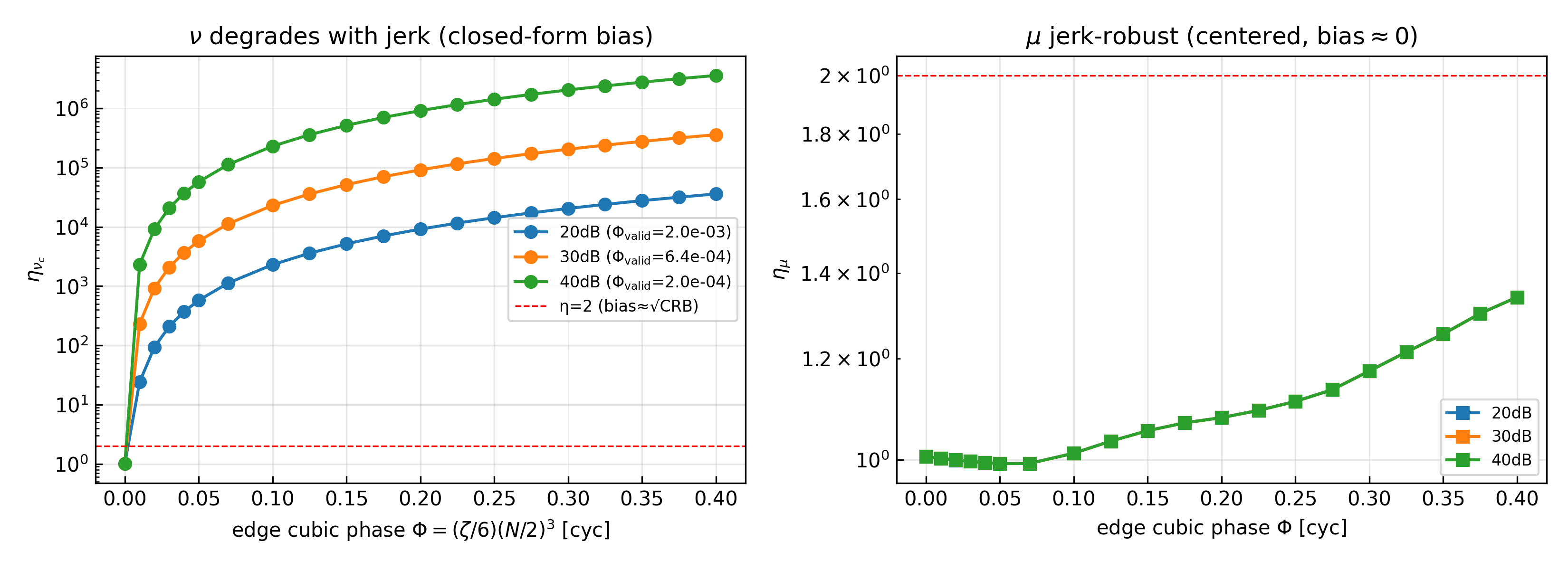}
  \caption{Unmodeled jerk ($N=256$, MC $=2\times10^{4}$),
  parameterized by the
  edge cubic phase $\Phi=(\zeta/6)(N/2)^{3}$ in cycles. Left:
  $\eta_{\nu_{c}}$ degrades with $\Phi$ as the bias closed
  form \eqref{eq:jerk-bias} predicts (deviation
  $2.4$--$3.7\%$ over the sweep), with the validity bound
  \eqref{eq:jerk-domain} marked per SNR
  ($\Phi_{\mathrm{valid}}=2.0\times10^{-3}$, $6.4\times10^{-4}$,
  $2.0\times10^{-4}$ cycles at $20/30/40$~dB). Right:
  $\eta_{\mu}$ stays flat (${\le}1.34$) over the entire sweep: the
  first-order immunity of the centered chirp-rate estimate
  (Proposition~\ref{prop:jerk}).}
  \label{fig:jerk}
\end{figure*}

Fig.~\ref{fig:jerk} examines robustness to unmodeled jerk, sweeping
the cubic term over a $40\times$ range at $20$--$40$~dB; the two
estimates behave as Proposition~\ref{prop:jerk} predicts. The
degradation of the frequency estimate follows the deterministic bias
\eqref{eq:jerk-bias} to within $2.4$--$3.7\%$, while the chirp-rate
estimate shows no bias trend over the
same range. In cell units the measured chirp-rate bias is
consistent with zero to within the Monte Carlo error (below
$6\times10^{-6}$ of a cell over the sweep), while the
frequency-axis bias grows to $0.46$ cell, and
$\eta_{\mu}\le1.34$
throughout, a variance-level excess rather than the
linear-in-$\Phi$ growth of the frequency axis. The validity bound
\eqref{eq:jerk-domain} contracts as the SNR grows, because at higher SNR the
noise floor is lower and the model bias appears earlier, consistent
with a second-order fit.

\section{Kernel Ablation: Edge Regime and Reduced Padding}
\label{supp:kernel}

\begin{figure*}[t]
  \centering
  \includegraphics[width=\textwidth]{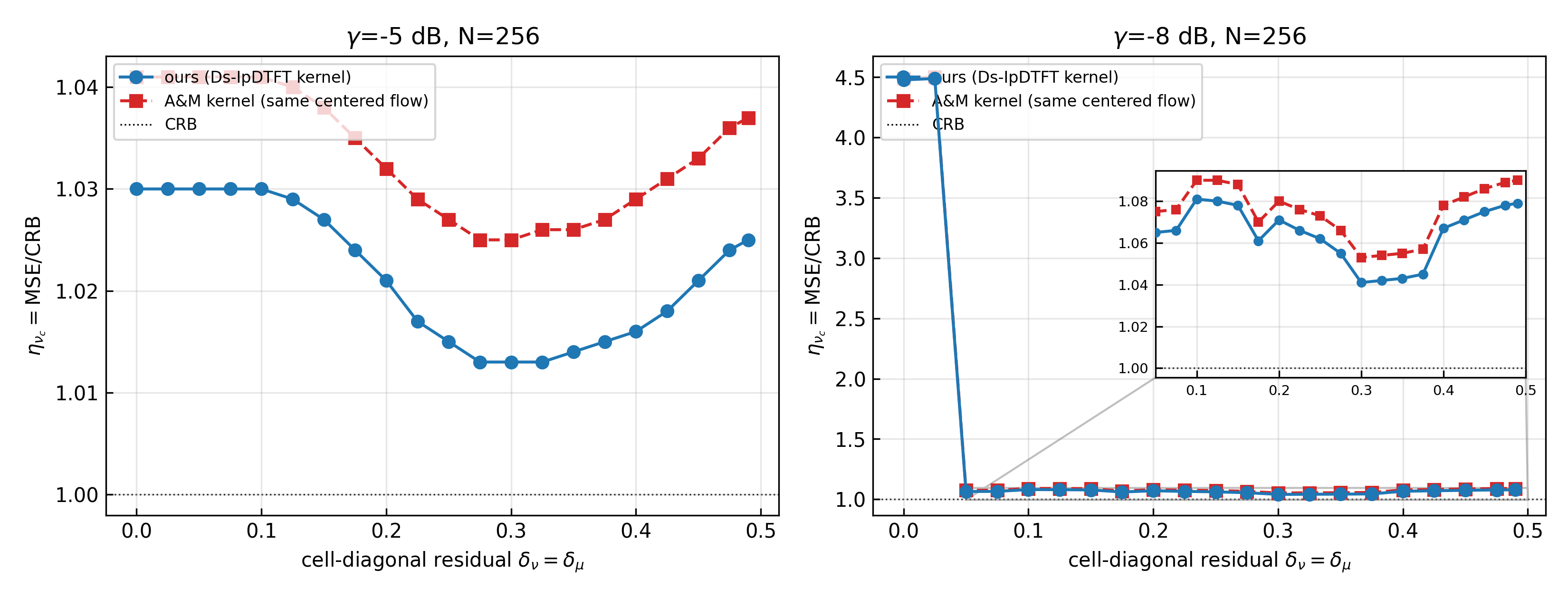}
  \caption{Kernel isolation at low SNR ($N=256$,
  MC $=2\times10^{4}$):
  identical centering, coarse stage, and chirp-rate stencil; only
  the frequency kernel differs (selectable-$p$, $p=0.35$, versus
  half-bin A\&M). Left: $\gamma=-5$~dB; right: $-8$~dB. The
  selectable-$p$ kernel is uniformly the better of the two
  (${\approx}1\%$ at the edge); under $M_{\mathrm{fac}}=3$
  padding, the refinement residual stays below $1/6$ bin, so the
  half-bin degradation of A\&M is largely suppressed and the remaining
  gap is small. Inset (right): zoom of $\delta\ge0.05$, past the
  outlier-inflated points at $\delta\le0.03$.}
  \label{fig:kernel}
\end{figure*}

Fig.~\ref{fig:kernel} isolates the contribution of the interpolation
kernel itself to the estimation accuracy, in the regime where the
two kernels differ: low SNR, with the residual near the
cell edge \cite{Wei2022Selectable,Wei2023DsIpDTFT}. With every other
component identical, the selectable-$p$ kernel tracks below the A\&M
kernel at all tested cell positions ($1.025$ versus $1.037$ at the
$-5$~dB corner; $1.079$ versus $1.090$ at $-8$~dB). \emph{Within the
padded pipeline} the gap is only ${\approx}1\%$, because the
zero-padded coarse stage already confines the refinement residual to
$|\delta|\le1/6$, where even the half-bin samples used in the A\&M
kernel remain inside
the main lobe; the structural advantage of $p<\tfrac12$
(Remark~\ref{rem:p}) is fully realized only at large controlled
residual; in the 1-D setting it reaches ${\approx}7\%$
\cite{Wei2023DsIpDTFT}.

\begin{figure*}[t]
  \centering
  \includegraphics[width=\textwidth]{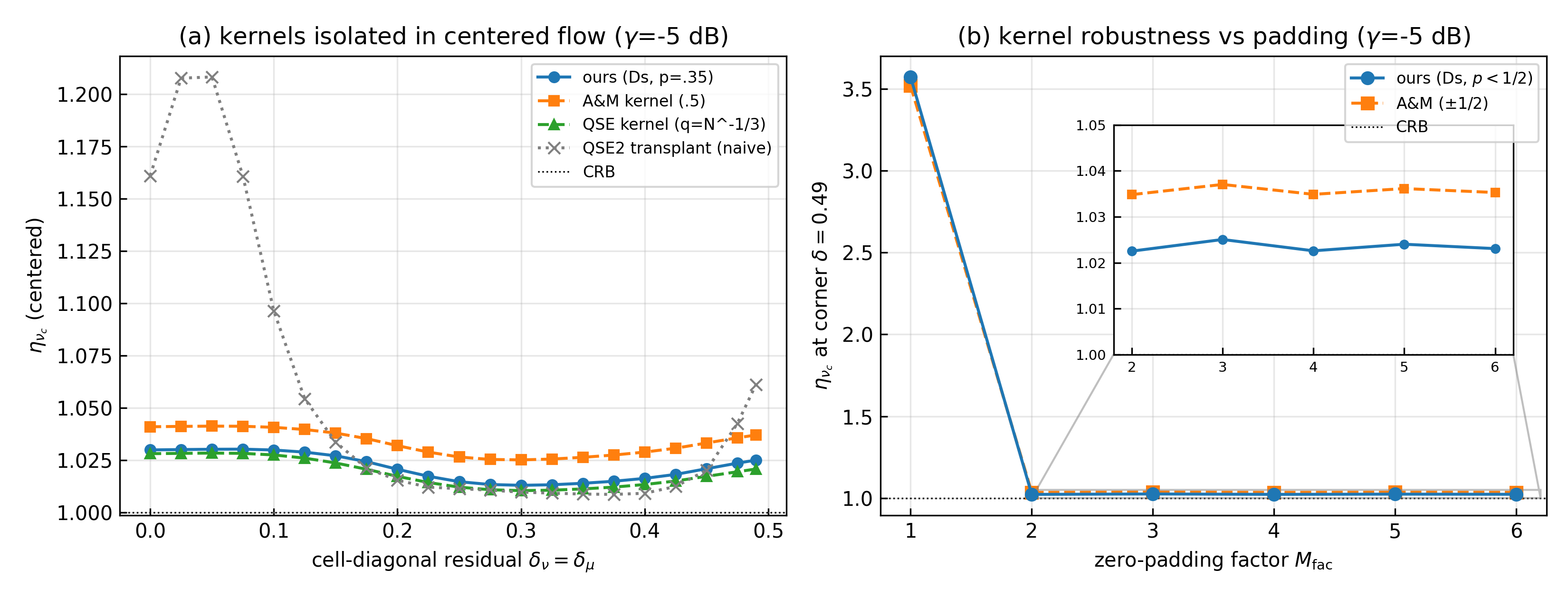}
  \caption{Two control experiments on the role of the frequency kernel
  ($N=256$, $\gamma=-5$~dB, centered $\eta_{\nu_{c}}$;
  MC $=2\times10^{4}$ per point).
  (a)~Kernels isolated in the identical centered, padded framework with the
  same chirp-rate stencil: the selectable-$p$, A\&M, and QSE kernels
  coincide and are uniform along the cell diagonal, whereas the naive
  QSE2 \emph{transplant} (uncentered, unpadded) varies nonmonotonically---%
  its deviation is thus an artifact of its structure, not a kernel property.
  (b)~Corner efficiency
  ($\delta=0.49$) versus padding factor: reducing $M_{\mathrm{fac}}$
  degrades both kernels together (coarse-stage scalloping loss, kernel
  independent), so the kernel choice
  does not compensate for reduced padding. Inset: zoom of
  $M_{\mathrm{fac}}=2$--$6$, where both kernels are flat and
  separated only by the ${\approx}1\%$ kernel-constant offset.}
  \label{fig:kernel-robust}
\end{figure*}

Two control experiments separate the kernel from the framework
(Fig.~\ref{fig:kernel-robust}). \emph{First}, when the three kernels
are placed in the identical centered, padded framework with the same
chirp-rate stencil (isolating the frequency kernel as for the
A\&M variant), the selectable-$p$, A\&M, and QSE kernels are
nearly indistinguishable and uniform along the cell diagonal
(Fig.~\ref{fig:kernel-robust}(a), all within $1.01$--$1.04$ at
$-5$~dB). The naive QSE2
\emph{transplant}
varies over the cell instead. A QSE
kernel embedded in the same framework is as uniform as ours. \emph{Second}, reducing the padding factor
(Fig.~\ref{fig:kernel-robust}(b)) is not compensated by either
kernel: at $M_{\mathrm{fac}}=1$ the corner
efficiency of both kernels degrades together to
${\approx}3.5$--$3.6$, and for
$M_{\mathrm{fac}}\ge2$ both sit at ${\approx}1.02$--$1.04$. The
$M_{\mathrm{fac}}=1$ degradation is driven by the coarse-stage
scalloping loss and the incipient outliers of
Theorem~\ref{thm:threshold}, which are kernel-independent and, at
this trial count, well sampled. The
selectable-$p$ kernel's benefit is the modest, controlled-residual edge
improvement above, and its selection here is primarily for continuity
with the estimator family of \cite{Wei2022Selectable,Wei2023DsIpDTFT}.

\section{Additional Verification Figures}

Fig.~S5 supports the discussion of the interpolation factor in
Section~III-B and Section~V-A of the main paper; Fig.~S6 reports the
efficiency constant of Theorem~\ref{thm:edgefree} and the noiseless
single-step inversion residual of Lemma~\ref{lem:inverse} as
functions of the signal length.

\begin{figure}[!t]
  \centering
  \includegraphics[width=\columnwidth]{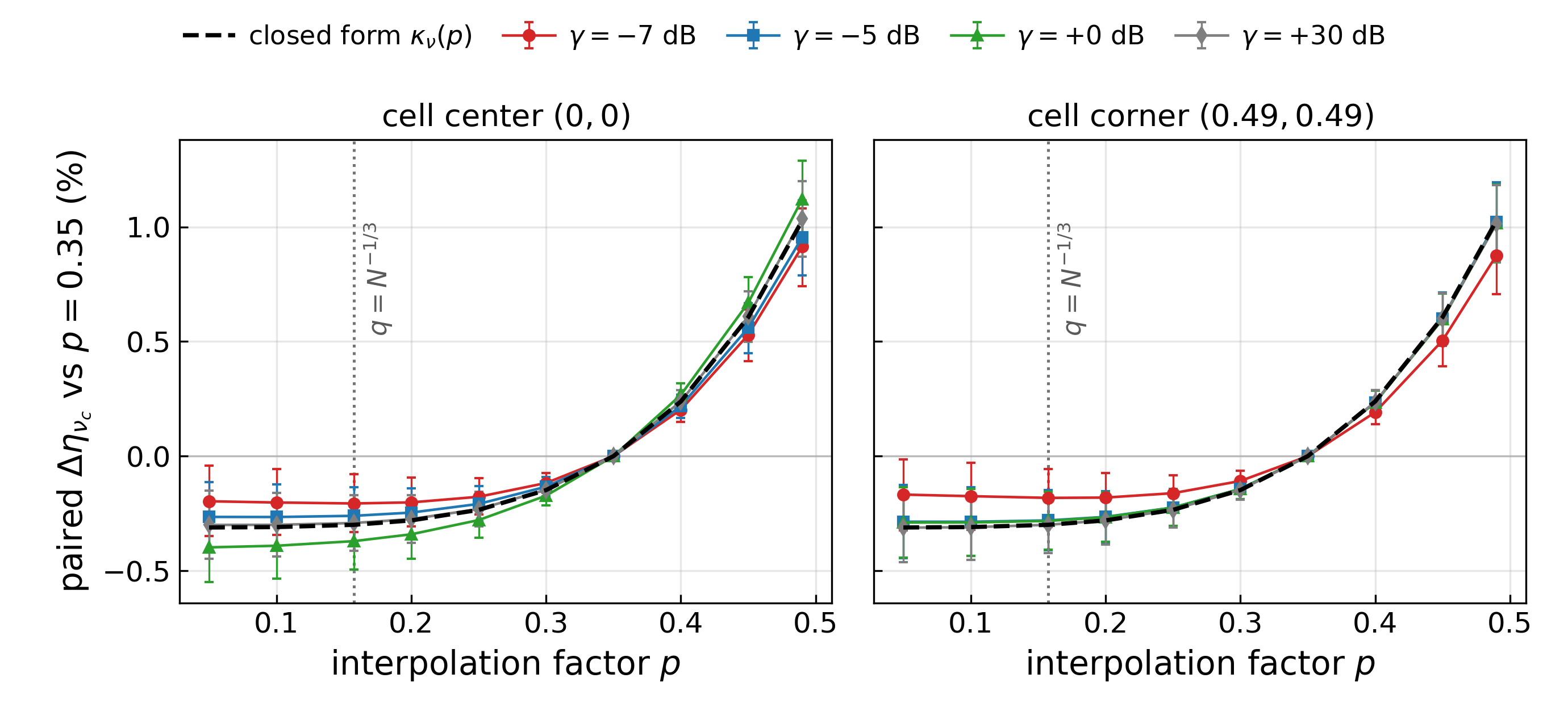}
  \caption{Paired change in $\eta_{\nu_{c}}$ relative to the fixed
  value $p=0.35$ over $p\in[0.05,0.49]$, at the cell center (left)
  and the $(0.49,0.49)$ corner (right); $N=256$,
  MC $=2\times10^{4}$ per point, common noise and the same coarse
  estimation stage for all $p$, bars are paired bootstrap $95\%$ CIs. The dashed
  curve is the closed form $\kappa_{\nu}(p)$ of \eqref{eq:kappa};
  the dotted line marks the QSE shift $q=N^{-1/3}$. The profiles
  track the closed form from $-7$~dB to $+30$~dB with a total
  spread of $1.0$--$1.5\%$; no trial reached
  the half-bin outlier threshold ($0$ of $2.4\times10^{5}$ over the
  three cell positions).}
  \label{fig:p-sweep}
\end{figure}

\begin{figure}[t]
  \centering
  \includegraphics[width=\columnwidth]{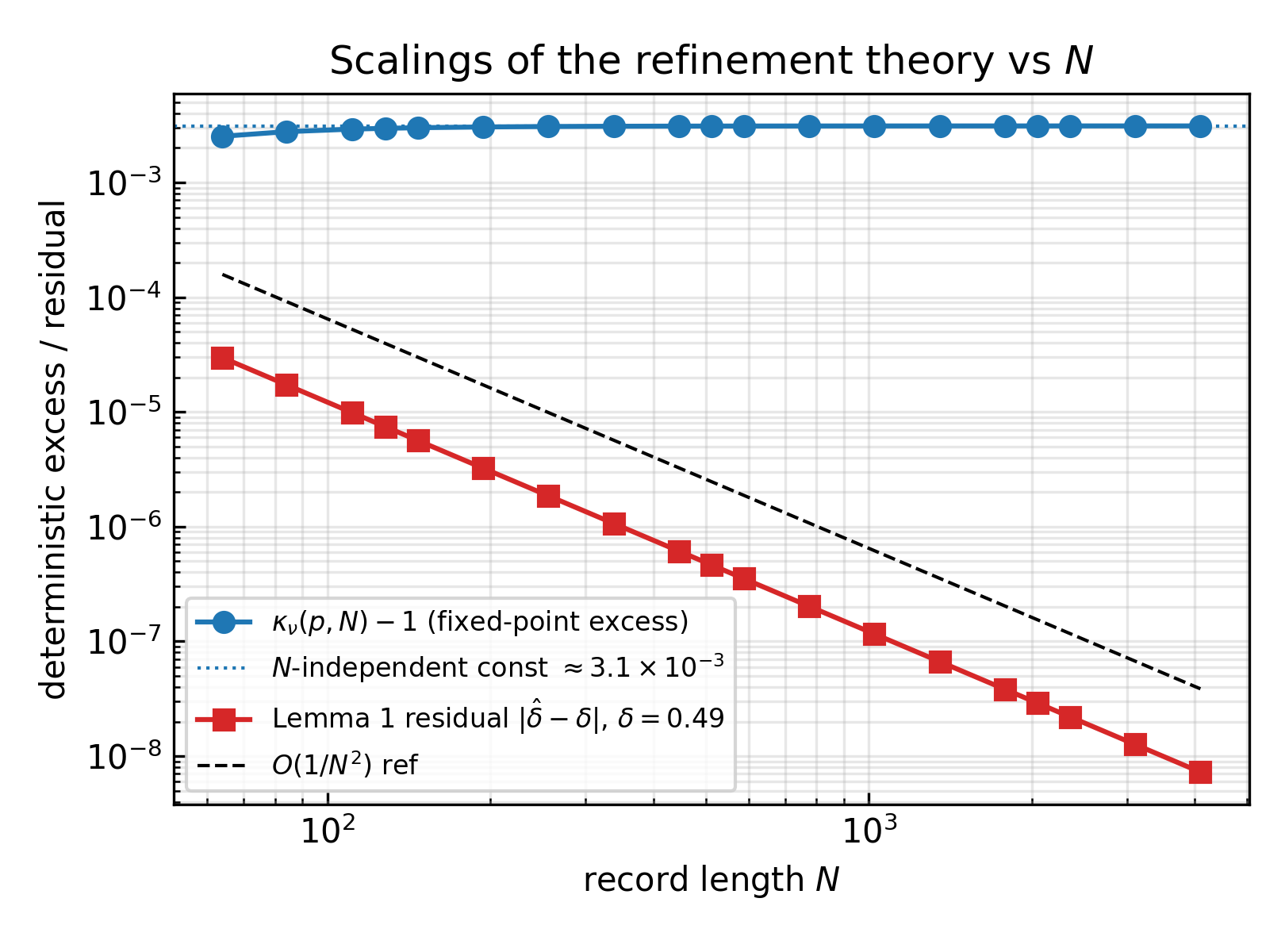}
  \caption{The efficiency constant $\kappa_{\nu}$ of
  Theorem~\ref{thm:edgefree} (closed form) and the noiseless
  single-step residual of Lemma~\ref{lem:inverse} (measured) versus
  the signal length $N$. The fixed-point efficiency excess
  $\kappa_{\nu}(p,N)-1$ tends to an $N$-independent constant
  (${\approx}3\times10^{-3}$ for $p=0.35$), so the
  estimator is CRB-efficient to a small fixed factor across signal
  lengths; the
  noiseless single-step inversion residual of Lemma~\ref{lem:inverse}
  at $\delta=0.49$ follows the $O(1/N^{2})$ reference over
  $N=64$--$4096$.}
  \label{fig:nscaling}
\end{figure}

\section{Additional Remarks}

\begin{remark}[Relation to the 1-D threshold theory]\label{rem:sq}
For a single tone, the exact acquisition probability was derived by
Serbes and Qaraqe \cite{SerbesQaraqe2022} as
$p=[1-\tfrac12 e^{-N\rho/2}]^{N-1}$, together with Lambert-$W$
closed forms for the breakdown and no-information thresholds,
building on the classical outlier analyses of
\cite{RifeBoorstyn1974,QuinnKootsookos1994}. Equation~\eqref{eq:pd} is the conditional
form of the same construction and reduces to a comparable expression
in the 1-D on-grid case ($L^{2}=1$, $\mathcal{M}\to N-1$).
Theorem~\ref{thm:threshold}
extends the theory in three directions specific to the joint problem:
the acquisition competes against a single effective count of
\emph{two-dimensional} cells; the threshold acquires a
\emph{continuous cell-position dependence} through the scalloping loss
\eqref{eq:straddle}, controllable by the padding factor; and the
floors $\Sigma_{\nu},\Sigma_{\mu}$ are per-axis prior variances. The
three-segment structure itself (prior floor / threshold / asymptotic)
is the classical interval-error picture, which underlies both
Ziv--Zakai-type bounds on any estimator \cite{Bell1997ZZB} and
estimator-specific interval-error threshold predictions for maximum
likelihood \cite{Athley2005Threshold};
Theorem~\ref{thm:threshold} provides a prediction of the latter kind,
in closed form, for the joint problem---with, specific to two
dimensions, the cell-position dependence through the scalloping loss. To
our knowledge no closed-form threshold characterization of joint
$(\nu,\mu)$ estimation is available in the literature; existing
chirp estimators---including the low-threshold quasi-maximum-likelihood
family \cite{Djurovic2014QML,Djurovic2018QMLReview}---are analyzed in
the asymptotic (CRB) region \cite{AldimashkiSerbes2020}, with
threshold behavior reported by simulation; the noisy-case analytical
models of \cite{AldimashkiSerbes2024} lower the breakdown
threshold, but do not characterize
the threshold location, its cell-position dependence, or the
three-segment MSE of the complete estimator.
\end{remark}

\begin{remark}[Necessity of centering]\label{rem:ablation}
Centering is a structural requirement. Without
it ($c=0$ in the dechirping), the residual chirp
$e^{j\pi\Delta\mu n^{2}}$, $n=0,\dots,N-1$, is asymmetric. Its linear
phase component displaces the apparent peak frequency by
$\approx\Delta\mu\,(N-1)/2$, injecting the chirp-rate error directly
into the frequency axis---this is the $-0.97$ correlation of
Section~\ref{ssec:crb} acting inside the refinement loop.
Section~\ref{sec:experiments} quantifies the resulting collapse for
all three kernels.
\end{remark}

\begin{remark}[Relation to 1-D edge-effect removals]\label{rem:morelli}
In one dimension, near-CRB accuracy at arbitrary residuals is
obtained without iteration by the WLS estimator of Morelli
\emph{et al.} \cite{MorelliWLS2022} and by the residual-adaptive
interpolator of D'Amico--Morelli \cite{DAmicoMorelli2022}, and the
selectable-$p$ kernel avoids the low-SNR edge effect
\cite{Wei2022Selectable,Wei2023DsIpDTFT}; all of these results
concern a single tone. Theorem~\ref{thm:edgefree} is, to our knowledge, the
first cell-uniform efficiency analysis for \emph{joint} $(\nu,\mu)$ interpolation,
where the two-dimensional residual cell (and its corners) has no 1-D
counterpart, and where centering---irrelevant in 1-D, where no
quadratic term exists---becomes the decisive ingredient.
\end{remark}

\appendices

\section{Proof of Theorem~\ref{thm:threshold}}
(i) Condition on the signal-cell power $s$. The noise-cell powers are
i.i.d.\ $\operatorname{Exp}(1)$, so the probability that all
$\mathcal{M}$ of them fall below $s$ is $(1-e^{-s})^{\mathcal{M}}$;
averaging over \eqref{eq:signal-cell-pdf} gives \eqref{eq:pd}. The
weak correlation of padded bins is absorbed into the calibrated
$\mathcal{M}$ (Remark~\ref{rem:thm1-honesty}). (ii) rests on two
approximations: conditioned on
correct acquisition the refinement is treated as CRB-accurate
(supported by Theorem~\ref{thm:edgefree} and the measurements of
Section~\ref{sec:experiments}), and conditioned
on an outlier the
selected cell is treated as uniform over the prior region, with
variance $\Sigma_{\theta}$---conservative near threshold, where
mis-selections are cell-scale near misses
(Remark~\ref{rem:thm1-honesty}); total expectation gives
\eqref{eq:three-segment}. (iii) For large $\mathcal{M}$,
$(1-e^{-s})^{\mathcal{M}}=\exp\bigl(\mathcal{M}\ln(1-e^{-s})\bigr)
\approx\exp(-\mathcal{M}e^{-s})$, a doubly exponential function of $s$
that switches from $0$ to $1$ in an $O(1)$ neighborhood of
$s=\ln\mathcal{M}$. Hence $P_{d}\approx\Pr\{s>\ln\mathcal{M}\}$, and
since $\mathbb{E}[s]=1+\Gamma$ concentrates around $\Gamma$ for large
$\Gamma$, the transition occurs at $\Gamma\approx\ln\mathcal{M}$;
substituting $\Gamma=NL^{2}\rho$ yields \eqref{eq:gth}. The effective
count $\mathcal{M}$ is calibrated once on a single configuration
(Remark~\ref{rem:thm1-honesty}).

\section{Proof of Lemma~\ref{lem:inverse}}
After exact dechirping, the noiseless data is a pure tone; its
DTFT magnitude at offset $r$ from the nearest bin is
$|X(\hat{k}+r)|=A\,|D(\delta-r)|$, $D(r)=\sin\pi r/\sin(\pi r/N)$.
For large $N$ the kernel tends to $D(r)\to N\operatorname{sinc}(r)
=N\sin\pi r/(\pi r)$; substituting the three samples
$r\in\{-p,0,p\}$ into \eqref{eq:ds-update} with this asymptotic form
and simplifying the resulting ratio yields $\hat\delta=\delta$
identically in $\delta$, the denominator constant $\cos\pi p$ being
the value that effects the cancellation
(Remark~\ref{rem:kernel}). For finite $N$ the Dirichlet kernel
differs from its $\operatorname{sinc}$ limit by $O(1/N^{2})$, so the
single-step inversion carries a residual of the same order; it is
\emph{uniform} in $\delta$.

\section{Proof of Theorem~\ref{thm:edgefree}}
The proof rests on two pillars.

\emph{Pillar A (centered decoupling).} By
\eqref{eq:fim-centered} the FIM of $(\phi_{c},\nu_{c},\mu)$ is
diagonal in the $(\nu_{c},\mu)$ block, and the refinement
inherits the same decoupling. A chirp-rate error $\Delta\mu$
leaves a residual chirp $e^{j\pi\Delta\mu(n-c)^{2}}$ that is even
about the signal center and hence does not displace the frequency
peak, while a frequency error enters the chirp-rate stencil only
through the locally flat peak magnitude. To leading order, the two
axes therefore
neither bias nor inflate each other, independently of the cell
position, and the
final linear map preserves efficiency since the centered estimates
are uncorrelated at that order; higher-order interactions enter the
$O(1/(N\rho))$ budget term.

\emph{Pillar B (kernel inversion and fixed point).} By
Lemma~\ref{lem:inverse} and the contraction property, the terminal
update operates at $\delta_{\mathrm{eff}}\approx0$. Linearizing
\eqref{eq:ds-update} there: the three magnitudes are
$|X(\hat{k})|\approx AN+u_{0}$ and
$|X(\hat{k}\pm p)|\approx A\,D(p)+u_{\pm}$, where
$u_{0}$ and $u_{\pm}$ are the in-phase noise components of the
corresponding DTFT samples: writing each sample as its signal term
plus complex noise $W$, $u=\operatorname{Re}\{W e^{-j\varphi}\}$ with
$\varphi$ the phase of the signal term. Each has
$\operatorname{var}(u)=N\sigma^{2}/2$, and
$\mathbb{E}[u_{+}u_{-}]=\tfrac{\sigma^{2}}{2}D(2p)$ (the DTFT noise
samples at spacing $2p$ are correlated through the Dirichlet kernel).
The signal terms cancel in the numerator of \eqref{eq:ds-update},
leaving
$\hat\delta\approx p\,(u_{+}-u_{-})/\bigl(2A[D(p)-N\cos\pi p]\bigr)$
with
$\operatorname{var}(u_{+}-u_{-})=\sigma^{2}\bigl(N-D(2p)\bigr)$;
converting bins to cycles/sample ($\nu_{c}=\hat{k}/N$) gives
\eqref{eq:kappa}. No step of the closed-form computation involves
$\delta_{\nu}$ or $\delta_{\mu}$, so $\kappa_{\nu}(p,N)$---the leading
term $\varepsilon_{0}$---is cell-position independent and tends
to an $N$-independent constant. The only residual dependence enters
through $\delta_{\mathrm{eff}}=O(\xi^{Q})$, contributing $O(\xi^{2Q})$,
and through the finite-SNR $O(1/(N\rho))$ term carried by the amplitude
nonlinearity of $|X|$; both are $\delta$-free bounds. At the corner, the
denominator $D(p)-N\cos\pi p$ remains a finite
nonzero constant, so
$\varepsilon_{\max}$ is finite. Within the linearization, every term of
the budget is nonnegative, so $\varepsilon_{\theta}\ge0$ to leading
order; Monte Carlo measurements scatter around $1+\varepsilon_{0}$ within
sampling error, on either side (Section~\ref{sec:experiments}).

\section{Proof of Proposition~\ref{prop:jerk}}
At high SNR a maximum-likelihood fit performs a least-squares
projection of the
unmodeled phase $2\pi\tfrac{\zeta}{6}n'^{3}$ onto the model basis
$\{1,n',n'^{2}\}$ spanned by the fitted parameters; the proposed
interpolation estimator tracks this projection up to a small
kernel-dependent factor, quantified in
Section~\ref{sec:experiments}. On the centered
grid all odd moments through the fifth vanish
($\sum n'=\sum n'^{3}=\sum n'^{5}=0$), so the cubic perturbation is
orthogonal to the even basis vectors $1$ and $n'^{2}$---hence zero
bias on $\phi_{c}$ and $\mu$---and projects only onto $n'$, with
coefficient $\tfrac{\zeta}{6}\sum n'^{4}/\sum n'^{2}
=\tfrac{\zeta}{6}\cdot\tfrac{3N^{2}-7}{20}$, giving
\eqref{eq:jerk-bias}. Equating $|\mathrm{bias}(\hat\nu_{c})|$ to
$\sqrt{\mathrm{CRB}_{\nu_{c}}}$ gives \eqref{eq:jerk-domain}.

\end{document}